\documentclass[9pt,shortpaper,twoside,web]{ieeecolor}
\usepackage{generic}
\usepackage{cite}
\usepackage{amsmath,amssymb,amsfonts}
\usepackage{algorithmic}
\usepackage{graphicx}
\usepackage{textcomp}
\usepackage{color}
\input{color.lrt}
\usepackage{tikz}
\usetikzlibrary{calc,shapes}
\usetikzlibrary{positioning,arrows}
\usepackage{pgfplots}
\usetikzlibrary{fit,calc}
\pgfplotsset{compat=1.14}
\usepackage{psfrag}
\usepackage{graphicx}
\usepackage{amsmath,amsfonts,amsxtra}
\usepackage{mathrsfs}
\usepackage{todonotes}
\usepackage{mathtools,leftidx}
\usepackage{calligra}
\usepackage{float}
\usepackage{cleveref}
\usepackage[section]{placeins}
\crefname{equation}{}{}
\crefrangelabelformat{equation}{({#3}{#1}{#4}--{#5}{#2}{#6})}%
\crefrangelabelformat{subequation}{({#3}{#1}{#4}--{#5}\crefstripprefix{#1}{#2}#6)}%

\usepackage{environ}
\usepackage{bm} 
\usepackage{bbm}
\allowdisplaybreaks[1]

\newcommand{\col}[1]{\operatorname{col}(#1)}

\newcommand{\diag}[1]{\operatorname{diag}(#1)}

\newtheorem{assum}{Assumption}
\renewcommand{\d}{\mathrm{d}}
\renewcommand{\t}{^{\top}}

\newcommand{\tint}{\textstyle\int}

\newcommand{\e}{\text{e}}

\tikzset{
	state/.style={circle,draw,minimum size=6ex},
	arrow/.style={-latex, shorten >=1ex, shorten <=1ex}}

\DeclareRobustCommand\bluecircle{
	\tikz{
		\node (0,0){};
		\fill [blue](0,0.0001) circle (2.9pt);
		\fill [white](0,0.0001) circle (1.7pt);
	}
}
\DeclareRobustCommand\greencube{
	\tikz{
		\node[green, rectangle,thick,line width=0.5mm,minimum width=2mm,minimum height=2mm, draw,radius=2.5pt, rectangle, inner sep=0.6125pt, above right=0.001mm and -2.5mm] (0,0){};
	}
}\DeclareRobustCommand\redcross{
	$\mathbin{\tikz [x=1.2ex,y=1.2ex,line width=1pt, red] \draw (0,0) -- (1,1) (0,1) -- (1,0);}$
}

\newtheorem{thm}{Theorem}
\newtheorem{Lemma}{Lemma}
\newtheorem{rem}{Remark}

\def\BibTeX{{\rm B\kern-.05em{\sc i\kern-.025em b}\kern-.08em
    T\kern-.1667em\lower.7ex\hbox{E}\kern-.125emX}}
\markboth{IEEE TRANSACTIONS ON AUTOMATIC CONTROL, VOL. XX, NO. XX, APRIL 2025}
{J. Deutscher and T. Enderes: A Koopman Operator Approach to Data-Driven Control of Semilinear Parabolic Systems}
\begin{document}
\title{A Koopman Operator Approach to Data-Driven Control\\ of Semilinear Parabolic Systems}
\author{Joachim~Deutscher,~\IEEEmembership{Member,~IEEE}, Tarik~Enderes
\thanks{J. Deutscher and T. Enderes are with the Institute of Measurement, Control and Microtechnology, Ulm University, Germany. (e-mail: joachim.deutscher@uni-ulm.de; tarik.enderes@uni-ulm.de).
}
}
\maketitle

\begin{abstract}
This paper is concerned with the data-driven stabilization of unknown boundary controlled semilinear parabolic systems. The nonlinear dynamics of the system are lifted using a finite number of eigenfunctionals of the Koopman operator related to the autonomous semilinear PDE. This results in a novel data-driven finite-dimensional model of the lifted dynamics, which is amenable to apply design procedures for finite-dimensional systems to stabilize the semilinear parabolic system. In order to facilitate this, a bilinearization of the lifted dynamics is considered and feedback linearization is applied for the data-driven stabilization of the semilinear parabolic PDE. This reveals a novel connection between the assignment of eigenfunctionals to the closed-loop Koopman operator and feedback linearization. By making use of a modal representation, exponential stability of the closed-loop system in the presence of errors resulting from the data-driven computation of eigenfunctionals and the bilinearization is verified. The data-driven controller directly follows from applying generalized eDMD to state data available for the semilinear parabolic PDE. An example of an unstable semilinear reaction-diffusion system with finite-time blow up demonstrates the novel data-driven stabilization approach.
\end{abstract}

\begin{IEEEkeywords}
Semilinear parabolic systems, Koopman operator, feedback linearization, spectral analysis, Lyapunov's auxiliary theorem, data-based control
\end{IEEEkeywords}

\section{Introduction}

\subsubsection{Background and Motivation} The stabilization of boundary controlled semilinear parabolic PDE systems arises in many applications. For example, thermal chemical-reacting and plasma systems or crystal growth processes to name but a few. If the controller has to be designed in a \emph{late-lumping} fashion, i.e., directly taking the nonlinear PDE system for the design into account, several methods are available in the literature. This includes nonlinear backstepping designs (see \cite{Vaz08,Vaz08a}), optimal control (see \cite{Kre22}) or discontinuous control (see \cite{Orl02}). Another wildely applied approach is to determine an approximation of the semilinear parabolic system and then design a controller for the resulting finite-dimensional model, i.e., applying \emph{early lumping}. Prominent methods are the Galerkin approach (see \cite{Cr01} for semilinear parabolic systems with distributed input) and the proper orthogonal decomposition (POD) (see \cite[Ch. 12]{Br22}). All these methods, however, require the knowledge of the PDE for the design. 

Recently, data-driven control methods have become more and more important due to the increasing availability of data. They allow to extend the controller design to complex system classes, where the first principle modeling is very tedious (see \cite{Br22}). In addition, no time consuming system modeling and parameter identification are necessary. In this respect, the \emph{Koopman operator approach} has emerged as a ubiquitous tool to deal with the data-driven analysis and synthesis of nonlinear dynamical systems (see \cite{Ma20}). This new viewpoint lifts the nonlinear autonomous system dynamics on an infinite-dimensional space of observables, where the evolution of the latter can be described by the linear Koopman operator globally. Hence, linear system and operator theory can be applied for the analysis of the underlying nonlinear dynamics. In addition, systematic methods exist for the data-driven approximation of the Koopman operator with guaranteed finite-data bounds (see \cite{Sch23}) in form of the linear regression-based \emph{extended dynamic mode decomposition (eDMD)} (see \cite{Ku16,Br22}). Currently, the focus of the Koopman approach for data-driven control has been on finite-dimensional nonlinear systems (see \cite{Ma20} for examples), while only few results are available for nonlinear PDE systems. In particular, the contributions \cite{Pe19,Ar18} apply the Koopman theory to nonlinear PDE systems in order to solve data-driven optimal control problems. Therein, however, the approximation of the Koopman operator is based on a spatial discretization of the underlying PDE dynamics. In \cite{Na20} it has been shown, that this is not necessary by introducing functionals as observables. This allows a direct spectral analysis of the Koopman operator and thus of the nonlinear PDE system. First applications of these new ideas include the solution of identification problems in \cite{Mau21} and the Koopman modal control of linear parabolic systems in \cite{Deu24}. So far, however, the results in \cite{Na20} have not been used for the data-driven control of nonlinear PDE systems, which is the obvious application of the Koopman theory.

\subsubsection{Contribution} This paper considers the data-driven state feedback control for unknown semilinear parabolic systems, for which only state data is available. For this, the underlying nonlinear dynamics are lifted by using a finite number of open-loop eigenfunctionals of the Koopman operator for the semilinear parabolic system. As a result, a Koopman modal representation in form of a finite-dimensional nonlinear system with linear eigendynamics and a state-dependent input vector field is obtained, which globally describes the system dynamics. Hence, the dominant dynamics are captured with a lower order model, when compared to a Galerkin-type approximation using the eigenfunctions related only to the linear part of the semilinear parabolic PDE (see \cite[Ch. 4]{Cr01}). The new Koopman lifting extends previous results in \cite{Gos21,Kai21,Sur16} for finite-dimensional nonlinear systems to infinite dimensions. For finite-dimensional systems \cite{Kai21} directly uses the Koopman model with state dependent input vector field for the design, while the results in \cite{Gos21} rely on a subsequent bilinearization for the controller and observer design. In the paper 
the \emph{optimal bilinearization method} in \cite{Gos21} is generalized to the semilinear parabolic systems in question. 
The new data-driven bilinear system description provides a firm basis to apply various design methods for finite-dimensional nonlinear systems to stabilize unknown semilinear parabolic PDE systems. In the paper, feedback linearization is utilized to facilitate the design of a stabilizing controller assigning a finite number of closed-loop eigenvalues. In particular, the \emph{single-step approach} proposed in \cite{Ka00} is applied, which circumvents the restrictive involutivity conditions necessary to map the system into nonlinear controller form (see \cite[Th. 4.2.3]{Is95}). Hence, feedback linearization under weak conditions and thus for a large class of semilinear parabolic PDE systems is possible. More importantly, it is shown that if the Koopman modal representation can be reformulated with an input vector field depending only on the open-loop eigenfunctionals without error, then this feedback linearizing controller also assigns a set of desired eigenfunctionals to the closed-loop system. This extends the results in \cite{Deu24} for linear parabolic systems to the semilinear case. This pleasing result directly exploits the inherent property of the Koopman operator that the system dynamics have a nice linear representation in the coordinates of the eigenfunctionals and relates it to feedback linearization. It is demonstrated that this design method can be reformulated as a nonlinear change of coordinates using modal coordinates w.r.t. the eigenfunctions of the linear part in the semilinear parabolic PDE. This allows to systematically verify closed-loop stability also in the presence of the errors appearing from the data-driven computation of the open-loop eigenfunctionals and in the system simplification. The design only requires a finite number of Koopman eigenvalues and the corresponding eigenfunctionals. They are determined from the available state data by making use of eDMD, which is generalized in \cite{Mau21} to PDE systems using functionals as observables. This results in a general and new data-driven approach for the stabilization of semilinear parabolic systems systematically exploiting the spectral properties of the Koopman operator for nonlinear PDEs.

\subsubsection{Organization} The next section introduces the considered data-driven stabilization problem. Then, Sec. \ref{sec:feedoefktl} presents the derivation of the lifted dynamics and the corresponding bilinearization. The design of a stabilizing controller using feedback linearization is dealt with in Sec. \ref{sec:feedoefktl2}. The data-driven solution of the Koopman eigenvalue problem using state data on the basis of eDMD is contained in Sec. \ref{sec:dewp}. Finally, the data-driven stabilization of a semilinear diffusion-reaction system illustrates the results of the paper in Sec. \ref{sec:ex}. 


\section{Problem formulation}\label{sec:prob}
\subsubsection{Similinear Parabolic System} Consider the \emph{unknown semilinear parabolic system}
\begin{subequations}\label{plant}
\begin{align}
 \dot{x}(z,t) &= \rho x''(z,t) + a(z)x(z,t) + f(z,x(z,t))  &&\label{parasys}\\
      x'(0,t) &= 0,                                          &&\hspace{-2cm} t > 0  \label{parasys1}\\ 
      x'(1,t) &= u(t),                                       &&\hspace{-2cm} t > 0, \label{parasys2}
\end{align}
\end{subequations}
with the state $x(z,t) \in \mathbb{R}$ defined on $(z,t) \in [0,1]\times\mathbb{R}^+$ and the input $u(t) \in \mathbb{R}$. It is assumed that $\rho > 0$ and $a \in C[0,1]$. Moreover, $f$ is in $C[0,1]$ w.r.t. $z$ and is analytic w.r.t. $x$ as well as $f(z,0) = f'(z,0) = 0$, $z \in [0,1]$, such that  $x = 0$ is an \emph{equilibrium} of \eqref{plant}. The initial conditions (ICs) of \eqref{plant} are $x(z,0) = x_0(z) \in \mathbb{R}$. 

It is assumed that \emph{state data} $x(z,t_i) \in \mathbb{R}$ at sampling times $t_i \geq 0$, $i = 1,\ldots, M$, is collected for an initial value $x(z,0) = x_0(z) \neq 0$ and $u \equiv 0$. For estimating $\rho$ from data only \emph{one} time derivative of the state $\dot{x}(z,t_0) \in \mathbb{R}$ for some $t_0 \geq 0$ is required, which results for a constant input $u(t) = u_0 \in \mathbb{R}$, $t \geq 0$.

\subsubsection{Koopman Operator and Generator} In order to introduce the Koopman operator, define the state $x(t) = x(\cdot,t)$ and consider the initial boundary value problem \eqref{plant}  for $u = 0$, which can be represented by the abstract initial value problem (IVP)
\begin{equation}\label{absivp1}
 \dot{x}(t) = \mathcal{A}x(t)  + f(\cdot,x(t)) = \mathcal{F}x(t), \quad x(0) = x_0 \in \mathbb{X}, t > 0,
\end{equation}
in the space $\mathbb{X} = L_2(0,1)$ of complex $L_2$-functions endowed with the inner product $\langle x_1, x_2\rangle = \int_0^1x_1(\zeta)\overline{x_2(\zeta)}\d\zeta$. Therein, the linear operator $\mathcal{A}h = \rho h'' + ah$, $h \in D(\mathcal{A}) \subset \mathbb{X}$ with $h \in D(\mathcal{A}) = \{h \in H^2(0,1)\,|\, h'(0) = h'(1) = 0\}$ as well as the nonlinear operator $\mathcal{F} : D(\mathcal{F}) \subset \mathbb{X} \to \mathbb{X}$ with $\mathcal{F}h = \mathcal{A}h + f(\cdot,h)$ and $D(\mathcal{F}) = D(\mathcal{A})$ defining the right hand side in \eqref{absivp1} are used. The following assumption is required in order to ensure well-posedness of the IVP \eqref{absivp1}.
\begin{assum}[Wellposedness] \hfill The nonlinearity 
	 $f : [0,1] \times \mathbb{X} \to \mathbb{X}$ in \eqref{absivp1} is \emph{locally Lipschitz continuous} in $x$, i.e., for a constant $r > 0$ there exists a positive constant $C(r)$ such that  
	\begin{equation}\label{Lcond}
	\|f(\cdot,x_1) - f(\cdot,x_2)\| \leq C \|x_1 - x_2\|
	\end{equation}
	holds for all $x_1, x_2 \in \mathbb{X}$ with $\|x_i\| \leq r$, $i = 1,2$. 
\end{assum}
This assumption ensures existence of a unique \emph{mild solution} for \eqref{absivp1} on $t \in [0,t_{\text {max}})$ with some $0 < t_{\text{max}} \leq \infty$ (see \cite[Th. 1.4, Ch. 6]{Paz83}). If the Lipschitz constant in \eqref{Lcond} is independent of $r$, i.e., $f$ is \emph{globally Lipschitz} continuous in $x$, then the mild solution is globally valid for all $t \geq 0$ (see \cite[Th. 1.2, Ch. 6]{Paz83}). Moreover, if $x_0 \in D(\mathcal{F})$, then the mild solution is a \emph{classical solution} (see \cite[Th. 1.5, Ch. 6]{Paz83}). Hence, there exists a \emph{semiflow} $\mathcal{T}_{\mathcal{F}}(t) : \mathbb{X} \to \mathbb{X}$, $t \geq 0$, generated by $\mathcal{F}$ describing the time evolution of the state $x$. The dynamics of \eqref{plant} are described by lifting the state $x$ using the \emph{observables} $\theta : D(\theta) = D(\mathcal{F}) \to \mathbb{C}$ being elements of the infinite-dimensional space $\mathbb{O}$ of observables. This is the space of bounded continuous complex-valued functionals endowed with the supremum norm $\|\theta\|_{\infty} = \sup_{x \in D(\theta)}|\theta[x]|$. Therein, the domain $D(\theta) = D(\mathcal{F})$ of the functionals in $\mathbb{O}$ is invariant under the semiflow $\mathcal{T}_{\mathcal{F}}(t)$ for $x_0 \in D(\mathcal{F})$ so that they are well-defined for all $x \in D(\theta)$. 
Since any measurement for \eqref{plant} is a functional, the observables $\theta[x]$ have the intuitive meaning as generalized measurements for \eqref{plant}. Then, the \emph{Koopman operator} $\mathscr{K}(t): \mathbb{O} \to \mathbb{O}$  associated with the semiflow $\mathcal{T}_{\mathcal{F}}(t)$ of \eqref{absivp1} can be introduced by the composition
\begin{equation}\label{Kop}
\mathscr{K}(t)\theta[x] = \theta[\mathcal{T}_{\mathcal{F}}(t)x], \quad \theta \in \mathbb{O}, x \in  D(\theta) = D(\mathcal{F}), t \geq 0,
\end{equation} 
which describes the time evolution of the observables. For each $t \in \mathbb{R}^+$ the map $\mathscr{K}(t)$ is an infinite-dimensional linear operator (see, e.g., \cite{Na20}). This is one of the crucial properties of the Koopman operator, which allows to analyze the uncontrolled semilinear parabolic system \eqref{plant} using linear infinite-dimensional system theory in the space of observables. 

The \emph{Koopman generator} $\mathscr{A} : D(\mathscr{A}) \subset \mathbb{O} \to \mathbb{O}$ is defined by the \emph{Lie generator} 
\begin{equation}\label{Kgen}
\mathscr{A}\theta[x] = \lim_{t \to 0^+}\frac{\mathscr{K}(t)\theta[x] - \theta[x]}{t} = \delta \theta[x;\mathcal{F}x]
\end{equation}
for all $x \in  D(\theta)$ associated to the Koopman semigroup $\mathscr{K}(t)$ (see \cite{Mau21,Do96}), in which the limit has to be taken in the topology of $\mathbb{O}$. Then, the domain $D(\mathscr{A})$ is the set of elements $\theta[x] \in \mathbb{O}$, for which the limit in \eqref{Kgen} exists. For determining the Koopman generator $\mathscr{A}$ the \emph{G\^ateaux differential} $\delta \theta[x;\mathcal{F}x]$ of the observable $\theta[x]$ in the direction of the function $\mathcal{F}x$ can be used. In particular, the G\^ateaux differential reads
\begin{align}\label{fablcalc}
		\mathscr{A}\theta[x] &= \delta \theta[x;\mathcal{F}x] \!=\! \d_{\epsilon}\theta[x + \epsilon\mathcal{F}x]|_{\epsilon = 0}\nonumber\\
		                &= \int_0^1(\delta_x\theta[x])(\zeta)\mathcal{F}x(\zeta)\d\zeta = \mathcal{L}_{\mathcal{F}}\theta[x],
\end{align}
which has the representation in form of a linear integral operator resulting in a functional. Therein, $\delta_x\theta[x] = \delta \theta[x]/\delta x$ denotes the \emph{functional derivative} of $\theta[x]$ w.r.t. the function $x$, which yields a function. Hence, \eqref{fablcalc} can be seen as a generalization $\mathcal{L}_{\mathcal{F}}$ of the Lie-derivative known from finite-dimensions. With this, the \emph{evolution equation} for the observables $\theta[x]$, i.e., $\theta[x(t)] = \mathscr{K}(t)\theta[x(0)]$, $t \geq 0$, takes the form
\begin{equation}\label{timeevobsp}
 \d_t\theta[x(t)] = \mathscr{A}\theta[x(t)], \quad t > 0, \theta[x(0)] \in D(\mathscr{A}) \subset \mathbb{O}.
\end{equation}
For further details concerning the definition and existence of the Koopman operator and generator for infinite-dimensional systems, the interested reader is referred to \cite{Na20,Mau21}.


\subsubsection{Koopman Eigenvalue Problem} In what follows, the eigenvalue problem for the Koopman generator related to PDEs is shortly reviewed (see \cite{Na20} for more details). For this, the  following assumption is required.
\begin{assum}[Conjugacy]\label{ass:conj}
 The system \eqref{absivp1} is locally \emph{conjugate} to its linearization around the equilibrium $x = 0$.
\end{assum}
This means that there locally exists a diffeomorphism $\tilde{x} = \Phi(x)$ such that $\Phi(\mathcal{T}_{\mathcal{F}}(t)x_0) = \mathcal{T}_{\mathcal{A}}(t)\Phi(x_0)$, $t \geq 0$, holds, in which $\mathcal{T}_{\mathcal{F}}(t)$ is the semiflow generated by the nonlinear operator $\mathcal{F}$ and $\mathcal{T}_{\mathcal{A}}(t)$ is the $C_0$-semigroup generated by the linear operator $\mathcal{A}$, i.e., the linearization of $\mathcal{F}$. Consequently, the semilinear parabolic system can be represented locally by its linearization so that the spectral properties of the latter described by the linear operator $\mathcal{A}$ are inherited by the Koopman operator $\mathscr{K}(t)$ (see \cite{Na20}). In particular, the linear operator  $-\mathcal{A}$ is a self-adjoint \emph{Sturm-Liouville operator} with a pure discrete point spectrum consisting of real simple eigenvalues $\lambda_i$, $i \in \mathbb{N}$ (see \cite{Del03}). If the equilibrium $x = 0$ of \eqref{absivp1} is isolated and \emph{hyperbolic}, then the existence of a local conjugacy is ensured by a generalization of the Hartman-Grobman theorem (see \cite{Lu91}).
With this, the \emph{eigenfunctionals} $\varphi_i[x] \in \mathbb{O}$, $i \in \mathbb{N}$, w.r.t. the \emph{Koopman eigenvalues} $\lambda_i$ (\emph{K-eigenvalues} for short) are the nontrivial solutions of the \emph{Koopman eigenvalue problem}
\begin{equation}\label{Kopewp}
	\mathscr{K}(t)\varphi_i[x] = \e^{\lambda_it}\varphi_i[x], \quad i \in \mathbb{N},
\end{equation}
for the Koopman operator $\mathscr{K}(t)$. For smooth eigenfunctionals $\varphi_i[x]$ it is verified in \cite{Na20} that they also solve the \emph{eigenvalue problem}
\begin{equation}\label{kewpp}
\mathscr{A}\varphi_i[x] = \mathcal{L}_{\mathcal{F}}\varphi_i[x] = \lambda_i\varphi_i[x],
\end{equation}
$\varphi_i[x] \in D(\mathscr{A})$, $i \in \mathbb{N}$, for the Koopman generator $\mathscr{A}$. 
\begin{rem}\label{rem:prineig}
It should be noted that \eqref{Kopewp} and \eqref{kewpp} only yield the \emph{principal K-eigenvalues} (see \cite{Na20}), which coincide with the eigenvalues of $\mathcal{A}$ and thus determine the stability of the parabolic system \eqref{plant}. Furthermore, also $\varphi^{\mu_1}_1[x]\cdot\ldots\cdot\varphi^{\mu_n}_n[x]$ with $\mu_i \in \mathbb{N}_0$ satisfying $\mu_1 + \ldots + \mu_n > 0$, $n > 1$, are additional eigenfunctionals of the Koopman generator w.r.t. the K-eigenvalues $\sum_{j=1}^n\mu_j\lambda_j$. \hfill $\triangleleft$	 
\end{rem}

\begin{rem}
For semilinear parabolic PDEs, the eigenfunctionals $\varphi_i[x]$ can, in general, not be determined explicitly. Closed-form solutions are only available for PDEs, where an explicit conjugacy to a linear PDE is known. This concerns more general types of semilinear  parabolic PDEs such as the Burgers equation, the viscous Hamilton-Jacobi PDEs and the nonlinear phase-diffusion equation, where an explicit conjugacy is given by Hopf-Cole-type transforms (see \cite{Ku18,Na20}). 
	\hfill $\triangleleft$	 
\end{rem}

\subsubsection{Data-driven Stabilization Problem} 
Consider the local \emph{nonlinear transformation} 
\begin{equation}\label{ntraf}
 \tilde{w} = \Phi(w) \in \mathbb{R}^n
\end{equation}
 with a local diffeomorphism $\Phi : \mathbb{R}^n \to \mathbb{R}^n$ around the origin, $w = \bm{\varphi_n}[x]$ and the open-loop eigenfunctionals $\bm{\varphi_n}[x(t)] = \col{\varphi_1[x(t)],\ldots,\varphi_n[x(t)]} \in \mathbb{C}^n$. Then, the \emph{state feedback controller}
\begin{equation}\label{sf}
 u = \alpha(w) -k\t  \Phi(w) 
\end{equation}
with $\alpha$ analytic around the origin and $\alpha(0) = 0$, $\partial_w\alpha|_{w = 0} = 0$ as well as the feedback gain $k \in \mathbb{R}^n$ can be introduced. With this, the closed-loop system takes the form
\begin{subequations}\label{kloop}
	\begin{align}
	  \dot{x}(z,t) &= \rho x''(z,t) + a(z)x(z,t) + f(z,x(z,t))\\
 		   x'(0,t) &= 0\\
		   x'(1,t) &= \alpha(w(t)) -k\t  \Phi(w(t)).
	\end{align}	
\end{subequations}
For determining \eqref{ntraf} and \eqref{sf}, the closed-loop system \eqref{kloop}  is considered in the modal coordinates $x_i = \langle x,\phi_i\rangle$, $i \in \mathbb{N}$. Therein, 
$\phi_i$, $i \in \mathbb{N}$, are the eigenvectors of $\mathcal{A}$ w.r.t. the eigenvalues $\lambda_i$ following from the eigenvalue problem
$\mathcal{A}\phi_i = \lambda_i\phi_i$, $\phi_i \in D(\mathcal{A})$. Since $-\mathcal{A}$ is a Sturm-Liouville operator, the sequence $\phi_i$, $i \in \mathbb{N}$, is an orthonormal Riesz basis for $\mathbb{X}$ (see \cite{Del03}). Then, with $x = \sum_{i=1}^\infty x_i\phi_i$ the transformation
\begin{subequations}\label{modtraf}
\begin{align}
 \tilde{w} &= \Phi(w) =  \Phi\Big(\bm{\varphi_n}[\sum_{i=1}^\infty x_i\phi_i]\Big) =  \bar{\Phi}(x_1,x_2,\ldots) \in \mathbb{R}^n\label{wtiltraf}\\
       x_i &= \langle x,\phi_i\rangle \in \mathbb{R}, \quad i > n,
\end{align}
\end{subequations}
can be applied to map the closed-loop system \eqref{kloop} represented in the modal coordinates $x_i = \langle x,\phi_i\rangle$, $i \in \mathbb{N}$, into 
\begin{subequations}\label{cl}
	\begin{align}
		\dot{\tilde{w}}(t) &= \tilde{A}\tilde{w}(t) + e_{\tilde{w}}(\tilde{w}(t),\bm{x_\infty}(t)) && \hspace{-0.95cm}\label{finlin}\\
		\dot{x}_i(t) &= \lambda_ix_i(t) + f_i(\tilde{w}(t),\bm{x_\infty}(t)) - \phi_i(1)k\t \tilde{w}(t) &&  \hspace{-0cm} \label{infsub}
	\end{align}	 
\end{subequations}
for $i > n$ with $\lambda_i < 0$ and $f_i = \langle f(\cdot,x),\phi_i\rangle$. Therein,  $\bm{x_\infty} = (x_{n+1},x_{n+2},\ldots) = (\langle x,\phi_{n+1}\rangle, \langle x,\phi_{n+2}\rangle, \ldots )\in \ell^2$ holds, in which the space $\ell_2 = \{(h_1,h_2,\ldots)\,|\, h_i \in \mathbb{C}, \sum_{i=1}^{\infty}|h_i|^2 < \infty\}$ is the Hilbert space of absolutely square summable sequences $(h_i)_{i\in\mathbb{N}}$ composed of scalars $h_i \in \mathbb{C}$ equipped with the inner product $\langle x,y \rangle_{\ell_2} = \sum_{i=1}^{\infty}x_i\overline{y_i}$. The transformation \eqref{ntraf} and the state feedback in \eqref{sf} have to be determined such that $\tilde{A} \in \mathbb{R}^{n \times n}$ is a Hurwitz matrix and the origin of the resulting closed-loop system \eqref{cl} is exponentially stable with the closed-loop eigenvalues of the linearization given by $\sigma(\tilde{A})$ and $\lambda_i$, $i > n$. Since the system \eqref{plant} is unknown, a \emph{data-driven approach} for this stabilization problem is presented by determining a finite number of open-loop eigenfunctionals from the available data using \emph{extended dynamic mode decomposition (eDMD)}.

\section{Koopman Modal System Representation}\label{sec:feedoefktl}
For determining \eqref{ntraf} and \eqref{sf}, a finite-dimensional system representation of the dominant dynamics to be modified by state feedback is derived in \emph{Koopman modal coordinates} $\varphi_i[x]$, $i = 1,\ldots,n$. They can be obtained from the \emph{extended dynamic mode decomposition (eDMD)} using available data (see \cite{Mau21,Ku16} and Section \ref{sec:dewp}). Consider the evolution equation 
\begin{subequations}\label{vewp}
	\begin{align}
		\d_t\varphi_i[x(t)] &= \int_0^1(\delta_x\varphi_i[x(t)])(\zeta)(\rho x''(\zeta,t) + a(\zeta)x(\zeta,t)\nonumber\\
		&\quad  + f(\zeta,x(\zeta,t)))\d\zeta\label{evphi}\\
		x'(0,t) &= 0    \label{bcu1} \\
		x'(1,t) &= u(t) \label{bcu2}
	\end{align}	 
\end{subequations}
for the open-loop eigenfunctionals $\varphi_i[x]$, $i = 1,\ldots,n$ (cf. \eqref{timeevobsp}, \eqref{kewpp}) and taking the input into account. Therein, it is assumed that $\varphi_i[x]$ is smooth so that the functional derivative in \eqref{evphi} exists. In what follows, the orthonormal basis spanned by the eigenvectors $\phi_i$, $i \in \mathbb{N}$, w.r.t. $\mathcal{A}$ for $\mathbb{X}$ will be used. Here, the fact that $\mathcal{A}$ is unknown leads to no problems, because the  $\phi_i$, $i \in \mathbb{N}$, are only needed for the analysis but not for the design of the controller.
Hence, the expansion $x = \sum_{i=1}^\infty\langle x,\phi_i\rangle\phi_i = \sum_{i=1}^\infty x_i\phi_i \in D(\theta) \subset \mathbb{X}$, can be inserted in $\varphi_i[x]$ yielding the representation
\begin{equation}\label{eigivar}
	\varphi_i(\bm{x}) = \varphi_i[\sum_{j=1}^\infty x_j\phi_j]
\end{equation}
of the open-loop eigenfunctionals $\varphi_i[x]$ as a function of an infinite countable number of variables $\bm{x} = (x_1,x_2,\ldots) \in \ell^2$. It is shown in \cite{Ven18,Ven21} that the functional derivative of \eqref{eigivar} becomes
\begin{equation}\label{funcdiff}
	\delta_x\varphi_i[x] = \sum_{j=1}^\infty\partial_{x_j}\varphi_i(\bm{x})\phi_j
\end{equation}
with $\partial_{x_j}\varphi_i(\bm{x}) = \langle \delta_x\varphi_i[x],\phi_j \rangle$. Then, the evolution equation \eqref{vewp} takes the form
\begin{multline}\label{infpde}
	\d_t\varphi_i(\bm{x}(t)) = \sum_{j=1}^\infty\partial_{x_j}\varphi_i(\bm{x}(t))\int_0^1\phi_j(\zeta)(\rho x''(\zeta,t)\\ + a(\zeta)x(\zeta,t)
	+ f(\zeta,x(\zeta,t)))\d\zeta.
\end{multline}
Two integrations by parts yield $\tint_0^1\rho\phi_j(\zeta)x''(\zeta,t)\d\zeta = \rho\phi_j(1)u(t) + \tint_0^1\rho\phi''_j(\zeta)x(\zeta,t)\d\zeta$ using the BCs \eqref{parasys1}, \eqref{parasys2} to be verified by the eigenvector $\phi_j$ and the BCs \eqref{bcu1}, \eqref{bcu2}. Then, the eigenvalue problem $\rho\phi''_j + a\phi_j = \lambda_j\phi_j$, $j \in \mathbb{N}$, for $\mathcal{A}$  shows that $\d_t\varphi_i(\bm{x}(t)) = \sum_{j=1}^\infty\partial_{x_j}\varphi_i(\bm{x}(t))\int_0^1(\lambda_j\phi_j(\zeta)x(\zeta,t) + f(\zeta,x(\zeta,t)))\d\zeta  + g_i[x(t)]u(t)$ with
\begin{equation}\label{gdef}
g_i[x] = \rho\sum_{j=1}^\infty\partial_{x_j}\varphi_i(\bm{x})\phi_j(1) = \rho(\delta_x\varphi_i[x])(1)
\end{equation}
implied by \eqref{funcdiff}. Applying the same reasoning to the G\^ateaux differential $\mathcal{L}_{\mathcal{F}}$ in \eqref{fablcalc} finally leads to
\begin{equation}\label{cfunc2}
	\d_t\varphi_i[x(t)] =
	\mathcal{L}_{\mathcal{F}}\varphi_i[x(t)]
    + g_i[x(t)]u(t).
\end{equation}
Note that \eqref{cfunc2} is also the evolution equation valid for any observable $\theta[x] \in \mathbb{O}$ in the presence of an input $u$. Hence, this result can also be useful if \emph{generator eDMD} is applied to determine a finite-dimensional continuous-time surrogate model (see \cite{Klus20}). With \eqref{kewpp}, the \emph{Koopman modal system representation}
\begin{equation}\label{nsys}
 \d_t\bm{\varphi_n}[x(t)] = \Lambda_n\bm{\varphi_n}[x(t)] + \rho(\delta_x\bm{\varphi_n}[x(t)])(1)u(t)
\end{equation}
follows with $\bm{\varphi_n}[x] = \col{\varphi_1[x],\ldots,\varphi_n[x]}$ and $\Lambda_n = \diag{\lambda_1,\ldots,\lambda_n}$. Note that \eqref{nsys} is a finite-dimensional lifted nonlinear dynamics \eqref{plant} in the open-loop eigenfunctionals, which is defined globally in the domain of the eigenfunctionals. This generalizes the corresponding results in \cite{Gos21,Ia24} for finite-dimensional nonlinear systems. As anticipated, a nonlinear system with linear diagonal eigendynamics and state dependent input vector field is also obtained for the semilinear parabolic system \eqref{plant}. The former property means that the eigenfunctionals diagonalize the nonlinear operator $\mathcal{F}$ (see \eqref{absivp1}), which extends the modal analysis of linear parabolic systems using eigenfunctions to the nonlinear case. An approach to directly deal with this type of systems is presented in \cite{Kai21}, which requires to solve a state-dependent Riccati equation online. Alternatively, the Koopman modal representation \eqref{nsys} can be viewed as a special structured \emph{linear parameter-varying system}, which allows to apply convex optimization methods for the controller design (see \cite{Ia24}).
In this paper, the system \eqref{nsys} is bilinearized so that finite-dimensional nonlinear control design procedures can be directly used. In particular, \eqref{nsys} is rewritten in the form of the \emph{bilinear system}
\begin{equation}\label{nsys2}
	\d_t\bm{\varphi_n}[x(t)] = \Lambda_n\bm{\varphi_n}[x(t)] + (b + N\bm{\varphi_n}[x(t)] +  e_u[x(t)])u(t) 
\end{equation}
with $b \in \mathbb{R}^n$ and $N \in \mathbb{R}^{n\times n}$ subject to the error $e_u[x(t)] =  \rho(\delta_x\bm{\varphi_n}[x(t)])(1) - (b + N\bm{\varphi_n}[x(t)]) \in \mathbb{R}^n$. The latter has then to be taken into account in the design. Only in rare cases it will be possible to choose $b$, $N$ and $n$ so that $e_u[x] = 0$, i.e., if the system \eqref{nsys} is \emph{finitely bilinearizable} (see \cite{Gos21} for a related result concerning finite-dimensional nonlinear systems). Therefore, the vector $b$ and the matrix $N$ can only be determined to minimize $e_u[x]$.  For this, the approach in \cite{Gos21} is extended to the semilinear parabolic systems \eqref{plant}.  This requires to introduce the \emph{cylinder functional} $\varphi_i(\bm{\xi_m}) = \varphi_i[\mathcal{P}_mx]$, $\mathcal{P}_mx \in D(\theta)$, in which  $\mathcal{P}_m$ is the \emph{projection operator} $\mathcal{P}_mh = \sum_{i=1}^m\langle h,\vartheta_i\rangle\vartheta_i$, $h \in \mathbb{X}$,
with $\vartheta_i \in D(\mathcal{A})$, $i \in \mathbb{N}$, any known orthonormal basis for $\mathbb{X}$ and
$\bm{\xi_m} = (\xi_1,\ldots,\xi_m) = (\langle x,\vartheta_1\rangle,\ldots,\langle x,\vartheta_m\rangle) \in \mathbb{C}^m$. It is shown in \cite{Ven21} that $\varphi_i(\bm{\xi_m})$ converges uniformly to  $\varphi_i[x]$ for $m \to \infty$ in any compact subset of $\mathbb{X}$. Hence, the cylinder functional qualifies as an approximation for \eqref{eigivar}. 
Consider $g(\bm{\xi_m}) = \rho\delta_x\bm{\varphi_n}(\bm{\xi_m})(1)$, which follows from \eqref{gdef} with $g = \col{g_1,\ldots,g_n}$, $m \geq n$, on $\mathbb{R}^m$, i.e., using the corresponding cylinder functional. With the inner product $\langle x,y \rangle_{\mathbb{R}^m} = \tint_{\mathbb{R}^m}x\t(\bm{\xi_m})y(\bm{\xi_m})\d\bm{\xi_m}$ inducing the norm\linebreak $\|\cdot\|_{\mathbb{R}^m} = (\langle \cdot,\cdot \rangle_{\mathbb{R}^m})^{1/2}$ on $\mathbb{R}^m$, the optimization problem
\begin{equation}\label{minprob}
 \min_{b,N}\|g -  
 \begin{bmatrix}
 b & N
 \end{bmatrix}
 \bm{\bar{\varphi}_n}\|_{\mathbb{R}^m}
\end{equation}
with $\bm{\bar{\varphi}_n}(\bm{\xi_m}) = \col{1,\bm{\varphi_n}(\bm{\xi_m})}$ can be formulated. The following theorem shows that this minimization problem has an analytic solution.
\begin{thm}[Optimal bilinearization]\label{lem:optbilin}
Define the \emph{Gramian} $G = \langle\bm{\bar{\varphi}_n} ,\bm{\bar{\varphi}\t_n}\rangle_{\mathbb{R}^m} \in \mathbb{R}^{(n+1) \times (n+1)}$ and $R = \langle g,\bm{\bar{\varphi}\t_n}\rangle_{\mathbb{R}^m} \in \mathbb{R}^{n \times (n+1)}$. Then, the minimization problem \eqref{minprob} has the solution $[b, N] = RG^\dagger$, in which $G^{\dagger}$ is the \emph{Moore-Penrose inverse} (see, e.g., \cite[Ch. 12.8]{La85}). If the set of functions $1,\varphi_1(\bm{\xi_m}),\ldots,\varphi_n(\bm{\xi_m})$ is linearly independent on $\mathbb{R}^m$, i.e., $\det G \neq 0$, then a unique solution $[b,N] = RG^{-1}$ is obtained.
\end{thm}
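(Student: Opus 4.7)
The plan is to recognize \eqref{minprob} as a standard vector-valued linear least-squares problem in the Hilbert space of $\mathbb{R}^n$-valued functions on $\mathbb{R}^m$ equipped with the inner product $\langle\cdot,\cdot\rangle_{\mathbb{R}^m}$, and then to extract the minimizer from its normal equations. Setting $P := [b\ N] \in \mathbb{R}^{n\times(n+1)}$, I would first expand the squared objective as
\[
J(P) = \|g - P\bm{\bar{\varphi}_n}\|_{\mathbb{R}^m}^2 = \langle g,g\rangle_{\mathbb{R}^m} - 2\operatorname{tr}(PR^\top) + \operatorname{tr}(PGP^\top),
\]
where $G = G^\top$ and $G\succeq 0$ are inherited directly from the definition of the Gramian. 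Because $J$ is convex quadratic in $P$, first-order stationarity is both necessary and sufficient for a global minimum.

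Differentiating $J$ with respect to $P$ and exploiting the symmetry of $G$ yields the normal equation $PG = R$. If $\det G \neq 0$, which by standard $L^2$ theory is equivalent to linear independence of the set $1, \varphi_1(\bm{\xi_m}),\ldots,\varphi_n(\bm{\xi_m})$ on $\mathbb{R}^m$, then $G^{-1}$ exists and $P = RG^{-1}$ is the unique minimizer. This disposes of the second assertion of the theorem.

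The step requiring the most care is the general case, where $G$ may be singular and $P = RG^{\dagger}$ must be justified as a minimizer rather than merely a candidate. For this I would invoke the orthogonal projection theorem on the closed, finite-dimensional subspace $V = \{P\bm{\bar{\varphi}_n}\,:\, P \in \mathbb{R}^{n\times(n+1)}\}$ of $L^2(\mathbb{R}^m;\mathbb{R}^n)$: a best approximation of $g$ in $V$ exists, hence there is at least one $P^\star$ with $P^\star G = R$. This existence is precisely the consistency condition $R = R G^\dagger G$ for the Moore--Penrose inverse. Substituting $P = R G^\dagger$ into the normal equation then gives $(RG^\dagger)G = R$, confirming that $P = RG^\dagger$ is a minimizer. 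The subtle point I would flag is that uniqueness of the projection $P\bm{\bar{\varphi}_n}\in V$ does not imply uniqueness of $P$ when $G$ is singular; the formula $RG^\dagger$ selects the minimum-norm representative among all solutions of $PG = R$, which is the advertised expression.
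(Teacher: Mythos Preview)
Your argument is correct and is precisely the standard normal-equations derivation for a vector-valued linear least-squares problem; the paper itself supplies no proof beyond the remark that the result ``directly follows from \cite{Gos21}'', and that reference uses essentially the same reasoning you outline. Your added care in the singular case---invoking the projection theorem to establish consistency $R = RG^\dagger G$ before verifying that $RG^\dagger$ satisfies $PG=R$, and noting that it is the minimum-Frobenius-norm representative---is more explicit than anything in the paper and is entirely sound.
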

The proof of this theorem directly follows from \cite{Gos21}. For large $m$ it may be cumbersome to compute the optimal solution, because this requires to evaluate $m$-dimensional integrals. A possible remedy is to use Monte Carlo methods for high dimensional integration (see \cite{Tan24}). In addition, a simple data-driven method for determining the optimal $b$ and $N$ numerically is provided in Sec. \ref{sec:dewp}. 

The resulting bilinear system may be the starting point to extend the optimal control approach in \cite{Gos21} or the Lyapunov control method in \cite{Huan18} to infinite dimensions. Recently, robust control was applied to the bilinear approximation model for nonlinear finite-dimensional systems in \cite{St24} guaranteeing the stability for finite state data.  
In the sequel, however, a feedback linearization approach is used for \eqref{nsys2} and is related to the closed-loop eigenfunctionals.

\section{Feedback Linearization}\label{sec:feedoefktl2}
By introducing the state $w = \bm{\varphi_n}[x]$ one obtains 
\begin{equation}\label{bilin}
 \dot{w}(t) = \Lambda_nw(t) + e_x[x(t)] + (b + Nw(t) + e_u[x(t)])u(t)
\end{equation}
for \eqref{nsys2} with the additional error $e_x[x]$, that also takes errors in the determination of the eigenfunctionals into account. Consider for \eqref{bilin} the local nonlinear transformation $\Phi : \mathbb{R}^n \to \mathbb{R}^n$ in \eqref{ntraf} and the state feedback \eqref{sf}.
Then, \eqref{bilin} is mapped into $\dot{\tilde{w}}(t) = \tilde{A}\tilde{w}(t) + e_{\tilde{w}}(\tilde{w}(t),x(t))$ with the error
\begin{equation}\label{etilde}
  e_{\tilde{w}}(\tilde{w},x) = \partial_w\Phi\circ\Phi^{-1}(\tilde{w})(e_x[x] + e_u[x](\alpha(\Phi^{-1}(\tilde{w}))-k\t\tilde{w}))
\end{equation}
if $\Phi(w)$ satisfies the first-order quasilinear PDE
\begin{equation}\label{spde21}
 \partial_w\Phi(w)(\Lambda_n w + (b + Nw)(\alpha(w) - k\t\Phi(w))) = \tilde{A}\Phi(w).
\end{equation}
Different from the approach in  \cite{Ka00}, the feedback $\alpha(w)$ adds additional degrees of freedom to simplify the solution of \eqref{spde21}. This and the initial condition $\Phi(0) = 0$ lead to a Cauchy problem, where the left hand side of \eqref{spde21} vanishes for $w = 0$. Then, the condition of the Cauchy-Kovalevskaya theorem for the solvability of \eqref{spde21} are not satisfied so that \eqref{spde21} becomes a \emph{singular PDE} (see \cite{Ka00}). Therefore, the next lemma clarifies the existence of a unique solution of \eqref{spde21} using \emph{Lyapunov's auxiliary theorem} (for details see \cite{Ka00}).
\begin{thm}[Feedback linearization]\label{lem:singpde1}
Assume that $(\Lambda,b)$ is controllable and that $\tilde{A}$ is a Hurwitz matrix, i.e., its eigenvalues $\tilde{\lambda}_i$, $i =1,\ldots,n$, are in the \emph{Poincar\'e domain}. The Cauchy problem \eqref{spde21} with the IC $\Phi(0) = 0$ has a unique analytic solution $\Phi(w)$ in a neighbourhood of $w = 0$ if the eigenvalues $\lambda_i$, $i = 1,\ldots,n$, of $\mathcal{A}$ and the eigenvalues
$\tilde{\lambda}_i$, $i = 1,\ldots,n$, of $\tilde{A}$ satisfy the \emph{non-resonance condition} $\lambda_i \neq p_1\tilde{\lambda}_1 + \ldots + p_n\tilde{\lambda}_n$ for all $i = 1,\ldots,n$, and for all nonnegative integers $p_i$ such that $p_1 + \ldots + p_n \geq 2$. Moreover, if $(k\t,\tilde{A})$ is observable, then $\Phi(w)$ is a change of coordinates around $w = 0$.
\end{thm}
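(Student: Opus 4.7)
The plan is to interpret \eqref{spde21} as a singular first-order quasilinear Cauchy problem of the type for which Lyapunov's auxiliary theorem \cite{Ka00} is designed. Since $\Phi(0)=0$, $\alpha(0)=0$ and $\partial_w\alpha|_{w=0}=0$, the coefficient of $\partial_w\Phi$ on the left of \eqref{spde21} vanishes at $w=0$, so Cauchy--Kovalevskaya does not apply. I would construct $\Phi$ as an analytic power series in a neighbourhood of the origin by determining its Taylor coefficients recursively and then proving convergence by a majorant-series argument, exactly as in the classical proof of Lyapunov's auxiliary theorem.

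First I would match the linear terms in \eqref{spde21}. Using $\Phi(w)=Jw+O(|w|^2)$ and $\alpha(w)=O(|w|^2)$, the linear order yields the algebraic identity
\begin{equation*}
J(\Lambda_n - b k\t J)\;=\;\tilde A\,J,
\end{equation*}
i.e.\ $\tilde A=J(\Lambda_n - bk\t J)J^{-1}$, so that $J$ realizes a similarity between the linearized closed-loop generator $A_{\mathrm{cl}}:=\Lambda_n-bk\t J$ and the target $\tilde A$. Controllability of $(\Lambda_n,b)$ allows a row $\hat k\t:=k\t J$ to be chosen by pole placement so that $\sigma(\Lambda_n-b\hat k\t)=\sigma(\tilde A)$; the similarity $J$ realizing this placement is the required Jacobian and the feedback gain in \eqref{spde21} is recovered as $k\t=\hat k\t J^{-1}$. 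In this way the eigenvalues of $\tilde A$ can be prescribed freely in the Poincar\'e domain.

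For the higher-order coefficients I would pass to the inverse expansion $\Psi:=\Phi^{-1}$ in $\tilde w$-coordinates, so that the ``autonomous'' field becomes the known linear field $\tilde A\tilde w$. Matching Taylor coefficients of $\Psi(\tilde w)=\sum_{|p|\ge 1}\Psi_p\tilde w^p$ and of $\beta(\tilde w):=\alpha(\Psi(\tilde w))$ in the inverted PDE $\partial_{\tilde w}\Psi\cdot\tilde A\tilde w=\Lambda_n\Psi+(b+N\Psi)(\beta-k\t\tilde w)$ produces, at each multi-index order $|p|\ge 2$, a Sylvester-type linear system
\begin{equation*}
\Bigl(\textstyle\sum_{j=1}^n p_j\tilde\lambda_j\Bigr)\Psi_p\;-\;\Lambda_n\Psi_p\;=\;R_p,
\end{equation*}
where $R_p$ depends only on Taylor coefficients of strictly lower order. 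The stated non-resonance hypothesis $\lambda_i\neq p_1\tilde\lambda_1+\ldots+p_n\tilde\lambda_n$ is exactly the assertion that $\sum_j p_j\tilde\lambda_j\notin\sigma(\Lambda_n)$ for every $|p|\ge 2$, so this Sylvester operator is invertible at each order and the formal power series is uniquely determined. Convergence of this series in a neighbourhood of the origin is the technical heart of Lyapunov's auxiliary theorem, for which I would reuse the majorant-series estimate in \cite{Ka00}: the Poincar\'e-domain assumption bounds $\|(\sum_j p_j\tilde\lambda_j I-\Lambda_n)^{-1}\|$ so that the recursion is dominated by a geometric series. This convergence step is the main obstacle of the proof, but it is entirely classical and local.

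Finally, to prove that $\Phi$ is a local diffeomorphism it remains to show $\det J\neq 0$. I would argue by contradiction: if $\ker J\neq\{0\}$, the identity $J\Lambda_n-\tilde A J=(Jb)(k\t J)$ forces $\ker J$ to be $\Lambda_n$-invariant (because $A_{\mathrm{cl}}$ coincides with $\Lambda_n$ on $\ker J$) and $\mathrm{Im}(J)$ to be a proper $\tilde A$-invariant subspace. Chasing a $\Lambda_n$-eigenvector inside $\ker J$ through the rank-one coupling produces a non-zero right eigenvector $v$ of $\tilde A$ satisfying $k\t v=0$, which violates the PBH test for observability of $(k\t,\tilde A)$. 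Hence $J$ is invertible, and the analytic inverse function theorem implies that $\Phi$ is an analytic local diffeomorphism around $w=0$.
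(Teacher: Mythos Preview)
The paper's own proof is a one-line deferral to \cite{Ka00}, i.e., to Lyapunov's auxiliary theorem; your proposal is precisely a reconstruction of that argument (formal power-series matching, Sylvester-type solvability at each order under the non-resonance condition, majorant-series convergence in the Poincar\'e domain). So the approach is the same.

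One point deserves correction. Your invertibility argument for $J=\partial_w\Phi(0)$ does not go through as written: a $\Lambda_n$-eigenvector in $\ker J$ lives on the domain side and does not, via the rank-one identity $J\Lambda_n-\tilde A J=(Jb)(k\t J)$, produce a $\tilde A$-eigenvector annihilated by $k\t$ on the codomain side. The clean route is the one you already set up for the higher orders: work directly with the inverse map $\Psi$, whose linear part $\Psi_1$ satisfies the \emph{linear} Sylvester equation
\[
\Lambda_n\Psi_1-\Psi_1\tilde A \;=\; bk\t .
\]
With $\sigma(\Lambda_n)\cap\sigma(\tilde A)=\emptyset$ this has a unique solution, and the classical Sylvester invertibility criterion (e.g.\ de Souza--Bhattacharyya) gives that $\Psi_1$ is nonsingular if and only if $(\Lambda_n,b)$ is controllable and $(k\t,\tilde A)$ is observable---exactly the hypotheses of the theorem. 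Then $\Phi=\Psi^{-1}$ is the desired local diffeomorphism. This also sidesteps the awkwardness that your linear-order equation for $J$ itself is quadratic (so that $J=0$ is a spurious solution); solving for $\Psi$ first makes the recursion genuinely linear at every order.
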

The proof directly follows from \cite{Ka00}. 
\begin{rem}
A method for the assignment of non-resonant closed-loop eigenvalues $\tilde{\lambda}_i$, $i = 1,\ldots,n$, can be found in \cite{Dev01}. Since the PDE \eqref{spde21} is singular, it cannot be solved with the method of characteristics. However, a power series solution is possible, since Theorem \ref{lem:singpde1} ensures an analytic solution in a neighborhood of the origin. A numerical algorithm, which requires to only solve linear algebraic equations can be found in \cite{Ka00}. In \cite{Deu08b} this method is further refined by making use of a Galerkin approach, which minimizes the remaining nonlinearities on a prespecified multivariable interval in the state space. 
\hfill $\triangleleft$	 
\end{rem}
Introduce  the nonlinear closed-loop operator $\tilde{\mathcal{F}}h = \rho h'' + ah + f(\cdot,h)$, $D(\tilde{\mathcal{F}}) = \{h \in H^2(0,1)\,|\, h'(0) = 0, h'(1) = \alpha(h)-k\t\Phi(h)\} \subset \mathbb{X}$. Then, the \emph{closed-loop Koopman generator} for \eqref{kloop} reads $\tilde{\mathscr{A}}\theta[x] = \mathcal{L}_{\tilde{\mathcal{F}}}\theta[x]$, $\theta[x] \in D(\tilde{\mathscr{A}})$
(cf. \eqref{fablcalc}). Hence, the corresponding \emph{closed-loop Koopman eigenvalue problem} is given by
\begin{equation}\label{clewp}
 \tilde{\mathscr{A}}\tilde{\varphi}_i[x] = \mathcal{L}_{\tilde{\mathcal{F}}}\tilde{\varphi}_i[x] =
 \tilde{\lambda}_i\tilde{\varphi}_i[x], 
\end{equation}
in which $\tilde{\varphi}_i[x] \in D(\tilde{\mathscr{A}})$, $i \in \mathbb{N}$, are the \emph{closed-loop eigenfunctionals} of $\tilde{\mathscr{A}}$ w.r.t. the \emph{closed-loop eigenvalues} $\tilde{\lambda}_i$ (cf. \eqref{kewpp}). The next theorem shows that if the errors $e_x[x]$ and $e_u[x]$ in \eqref{bilin} vanish and $\tilde{A} = \tilde{\Lambda}_n = \diag{\tilde{\lambda}_1,\ldots,\tilde{\lambda}_n}$, then $\bm{\tilde{\varphi}_n}[x] = \col{\tilde{\varphi}_1[x],\ldots,\tilde{\varphi}_n[x]} = \Phi(\bm{\varphi_n}[x])$. This means that the closed-loop eigenfunctionals define a (partially) linearizing change of coordinates $\tilde{w} = \bm{\tilde{\varphi}_n}[x]$ into \eqref{cl} with $e_{\tilde{w}} = 0$. Furthermore, the state feedback \eqref{sf} assigns the eigenfunctionals $\bm{\tilde{\varphi}_n}[x]$ to the closed-loop system similar to what has been shown in \cite{Deu24} for linear parabolic systems.
In fact, the result $\bm{\hat{\tilde{\varphi}}_n}[x] = \Phi(\bm{\varphi_n}[x])$ can be seen as an approximation of the solution for the closed-loop eigenvalue problem \eqref{clewp} giving a system theoretic meaning to the considered feedback linearization approach. For this, the next theorem clarifies that for vanishing errors in \eqref{bilin} the closed-loop eigenfunctionals $\bm{\tilde{\varphi}_n}[x]$ are obtained. 
\begin{thm}[Closed-loop eigenfunctionals]\label{lem:singpde11}\hfill
Assume that $e_x[x] =\linebreak e_u[x] = 0$ in \eqref{bilin}. Then, $\bm{\tilde{\varphi}_n}[x] = \Phi(\bm{\varphi_n}[x])$ is a solution of the closed-loop Koopman eigenvalue problem \eqref{clewp} w.r.t. the eigenvalues of $\tilde{A} = \tilde{\Lambda}_n = \diag{\tilde{\lambda}_1,\ldots,\tilde{\lambda}_n}$ in \eqref{spde21}. 
\end{thm}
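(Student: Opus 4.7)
The plan is to verify the claimed eigenvalue identity by differentiating $\Phi(\bm{\varphi_n}[x(t)])$ along the closed-loop trajectories of \eqref{kloop} and then invoking the singular PDE \eqref{spde21} together with the hypothesis $\tilde{A}=\tilde{\Lambda}_n$. Since the closed-loop generator $\tilde{\mathscr{A}}=\mathcal{L}_{\tilde{\mathcal{F}}}$ is, by definition, the G\^ateaux-type time derivative along the semiflow generated by $\tilde{\mathcal{F}}$ (cf.\ \eqref{fablcalc} and \eqref{cfunc2}), the task reduces to checking $\d_t\Phi(\bm{\varphi_n}[x(t)])=\tilde{\Lambda}_n\Phi(\bm{\varphi_n}[x(t)])$ componentwise.

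First, I would use the error-free assumption $e_x[x]=e_u[x]=0$ to reduce the lifted dynamics \eqref{bilin} to the exact bilinear form $\dot{w}=\Lambda_n w+(b+Nw)u$ with $w(t)=\bm{\varphi_n}[x(t)]$. Inserting the feedback $u=\alpha(w)-k\t\Phi(w)$ that is imposed through the closed-loop boundary condition in \eqref{kloop} yields
\begin{equation*}
\dot{w}=\Lambda_n w+(b+Nw)\bigl(\alpha(w)-k\t\Phi(w)\bigr).
\end{equation*}
Next, applying the chain rule to $\tilde{w}(t)=\Phi(w(t))$ gives $\dot{\tilde{w}}=\partial_w\Phi(w)\,\dot{w}$, and substituting the displayed expression on the right-hand side together with the singular PDE \eqref{spde21} collapses this to $\dot{\tilde{w}}=\tilde{A}\Phi(w)=\tilde{\Lambda}_n\tilde{w}$ under the choice $\tilde{A}=\tilde{\Lambda}_n$.

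Finally, reading the obtained identity component by component with $\tilde{\varphi}_i[x]=\bigl(\Phi(\bm{\varphi_n}[x])\bigr)_i$ and identifying $\d_t\tilde{\varphi}_i[x(t)]=\mathcal{L}_{\tilde{\mathcal{F}}}\tilde{\varphi}_i[x(t)]$ along the closed-loop semiflow yields $\mathcal{L}_{\tilde{\mathcal{F}}}\tilde{\varphi}_i[x]=\tilde{\lambda}_i\tilde{\varphi}_i[x]$ for $i=1,\ldots,n$, which is precisely the closed-loop Koopman eigenvalue problem \eqref{clewp}.

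The hard part is not the algebra, which collapses immediately thanks to \eqref{spde21}, but the rigorous justification that the composite functional $\tilde{\varphi}_i[x]=\bigl(\Phi\circ\bm{\varphi_n}\bigr)_i[x]$ belongs to $D(\tilde{\mathscr{A}})$ and that the chain rule remains valid at the level of G\^ateaux differentials so that \eqref{cfunc2} may be applied with the feedback-modified input. This should follow from the analyticity of $\Phi$ around the origin guaranteed by Theorem~\ref{lem:singpde1} together with the smoothness of the open-loop eigenfunctionals $\varphi_i$ assumed throughout Section~\ref{sec:feedoefktl}.
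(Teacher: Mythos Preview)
Your proposal is correct and follows essentially the same route as the paper: differentiate $\Phi(\bm{\varphi_n}[x(t)])$ along closed-loop trajectories, use the chain rule together with the error-free bilinear dynamics and the singular PDE \eqref{spde21} to obtain $\d_t\Phi(\bm{\varphi_n}[x])=\tilde{\Lambda}_n\Phi(\bm{\varphi_n}[x])$, and then identify this with $\mathcal{L}_{\tilde{\mathcal{F}}}\tilde{\varphi}_i[x]=\tilde{\lambda}_i\tilde{\varphi}_i[x]$. The only cosmetic difference is that the paper makes the last identification explicit by rewriting the time derivative in integral form via the functional-derivative chain rule $\partial_{\bm{\varphi_n}}\Phi(\bm{\varphi_n}[x])\,\delta_x\bm{\varphi_n}[x]=\delta_x\Phi(\bm{\varphi_n}[x])$, whereas you invoke the generator interpretation of \eqref{cfunc2} directly; both are equivalent, and your remark on the regularity needed for $\tilde{\varphi}_i\in D(\tilde{\mathscr{A}})$ is apt.
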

\begin{proof}
Differentiating $\Phi(\bm{\varphi_n}[x])$ w.r.t. time yields with \eqref{nsys}, \eqref{sf}, \eqref{spde21} and vanishing errors the result $\d_t\Phi(\bm{\varphi_n}[x]) = \partial_{\bm{\varphi_n}}\Phi(\bm{\varphi_n}[x])\d_t\bm{\varphi_n}[x] = \partial_{\bm{\varphi_n}}\Phi(\bm{\varphi_n}[x])                    (\Lambda_n\bm{\varphi_n}[x] + \rho(b + \linebreak  N\bm{\varphi_n}[x])(\alpha(\bm{\varphi_n}[x])-k\t\Phi(\bm{\varphi_n}[x]))) = \tilde{\Lambda}_n\Phi(\bm{\varphi_n}[x])$. In view of $\Lambda_n\bm{\varphi_n}[x] = \tint_0^1(\delta_x\bm{\varphi_n}[x])(\zeta)\mathcal{F}x(\zeta)\d\zeta$ implied by \eqref{kewpp} the result
$\tint_0^1(\partial_{\bm{\varphi_n}}\Phi(\bm{\varphi_n}[x])\delta_x\bm{\varphi_n}[x])(\zeta)\mathcal{F}x(\zeta)\d\zeta
  + \rho(\partial_{\bm{\varphi_n}}\Phi(\bm{\varphi_n}[x])\delta_x\bm{\varphi_n}[x(t)])(1)(\alpha(\bm{\varphi_n}[x])-k\t\Phi(\bm{\varphi_n}[x]))
  = \tilde{\Lambda}_n\Phi(\bm{\varphi_n}[x])$ also holds. Then, $\partial_{\bm{\varphi_n}}\Phi(\bm{\varphi_n}[x])\delta_x\bm{\varphi_n}[x] = \delta_x\Phi(\bm{\varphi_n}[x])$ yields $\tint_0^1\delta_x\Phi(\bm{\varphi_n}[x](\zeta)\tilde{\mathcal{F}}x(\zeta)\d\zeta = \tilde{\Lambda}_n\Phi(\bm{\varphi_n}[x])$ so that $\bm{\tilde{\varphi}_n}[x] = \Phi(\bm{\varphi_n}[x])$ follows verifying \eqref{clewp}.
\end{proof}	



The next lemma shows that the overall transformation $\Theta : \mathbb{C}^n \times \ell^2 \to \mathbb{R}^n$ given by
$\tilde{w} = \Phi(\bm{\varphi_n}(\bm{x})) = \Theta(\bm{x_n},\bm{x_\infty})$ with  $\bm{x_n} = (x_{1},\ldots,x_{n},) \in \mathbb{C}^n$ and $\bm{x_\infty} = (x_{n+1},x_{n+2},\ldots) \in \ell^2$, $x_i = \langle x,\phi_i\rangle$, is a local change of coordinates (cf. \eqref{wtiltraf}). Then, there exists a nonlinear transformation $\Theta^{-1} : \mathbb{C}^n \times \ell^2 \to \mathbb{C}^n$ such that
\begin{equation}\label{xexpand}
 x = \bm{\phi_n}\t\Theta^{-1}(\tilde{w},\bm{x_\infty}) + \sum_{i=n+1}^\infty\langle x,\phi_i\rangle\phi_i
\end{equation}
with $\bm{\phi_n}(z) = \col{\phi_1(z),\ldots,\phi_n(z)} \in \mathbb{C}^n$ verifying that \eqref{modtraf} is also a change of coordinates.
\begin{Lemma}[Change of coordinates]\label{lem:coord}\hfill
There exists the inverse transformation 
\begin{equation}\label{invtraf}
 \bm{x_n} = \Theta^{-1}(\tilde{w},\bm{x_{\infty}})
\end{equation}
if $\Phi(w)$, $\Phi(0) = 0$, is a local change of coordinates around the origin and $\bm{\varphi_n}(\bm{x})$, $\bm{\varphi_n}(\bm{0}) = \bm{0}$, satisfies $\det\partial_{\bm{x_n}}\bm{\varphi_n}(\bm{0}) \neq 0$. If $\Theta(\bm{x})$ is a $C^m$-map (i.e., $m$-times continuously Fr$\acute{\text{e}}$chet differentiable) in a neighborhood of $(\bm{0_n},\bm{0})$, then so is $\Theta^{-1}(\tilde{w},\bm{x_{\infty}})$.
\end{Lemma}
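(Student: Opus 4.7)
The plan is to apply the implicit function theorem on Banach spaces to the map $F : \mathbb{C}^n \times \mathbb{C}^n \times \ell^2 \to \mathbb{C}^n$ defined by $F(\bm{x_n},\tilde w,\bm{x_\infty}) := \Theta(\bm{x_n},\bm{x_\infty}) - \tilde w = \Phi(\bm{\varphi_n}(\bm{x_n},\bm{x_\infty})) - \tilde w$. Since $\bm{\varphi_n}(\bm{0}) = \bm{0}$ and $\Phi(\bm{0}) = \bm{0}$, the point $(\bm{x_n},\tilde w,\bm{x_\infty}) = (\bm{0},\bm{0},\bm{0})$ is a zero of $F$, and $\bm{x_\infty} \in \ell^2$ enters as an infinite-dimensional parameter.

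First I would compute the partial Fr\'echet derivative of $\Theta$ w.r.t.\ $\bm{x_n}$ at the origin via the chain rule, obtaining
\begin{equation*}
 \partial_{\bm{x_n}}\Theta(\bm{0},\bm{0}) \;=\; \partial_w\Phi\big|_{w=\bm{0}} \cdot \partial_{\bm{x_n}}\bm{\varphi_n}\big|_{\bm{x}=\bm{0}}.
\end{equation*}
By hypothesis $\Phi$ is a local diffeomorphism around $w = 0$, so $\partial_w\Phi|_{w=\bm{0}}$ is an invertible $n \times n$ matrix, and by the assumption $\det\partial_{\bm{x_n}}\bm{\varphi_n}(\bm{0}) \neq 0$ the second factor is invertible as well. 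Hence the product is a linear isomorphism of $\mathbb{C}^n$, which is precisely the hypothesis required by the Banach-space implicit function theorem (see e.g.\ \cite{Za86}).

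Applying this theorem to $F$ produces open neighborhoods $U \subset \mathbb{C}^n$ of $\bm{0}$ and $V \subset \ell^2$ of $\bm{0}$ together with a unique map $\Theta^{-1} : U \times V \to \mathbb{C}^n$ satisfying $\Theta^{-1}(\bm{0},\bm{0}) = \bm{0}$ and $\Theta(\Theta^{-1}(\tilde w,\bm{x_\infty}),\bm{x_\infty}) = \tilde w$ for all $(\tilde w,\bm{x_\infty}) \in U \times V$, which is exactly \eqref{invtraf}. The regularity assertion then follows from the standard fact that the implicitly defined map inherits the regularity class of $F$: if $\Theta$ is $C^m$ in a neighborhood of $(\bm{0_n},\bm{0})$, then so is $\Theta^{-1}$.

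The main obstacle is not algebraic but regularity-theoretic, namely making sure that $\bm{\varphi_n}(\bm{x})$ and consequently $\Theta(\bm{x_n},\bm{x_\infty})$ are genuinely Fr\'echet differentiable as maps on $\mathbb{C}^n \times \ell^2$ rather than only Gateaux differentiable coordinate-wise, so that the infinite-dimensional implicit function theorem is actually applicable with $\bm{x_\infty}$ as a Hilbert-space parameter. This is ensured here because the existence of the partial derivatives $\partial_{x_j}\varphi_i(\bm{x})$ entering \eqref{funcdiff} together with the smoothness of the eigenfunctionals assumed in Section~\ref{sec:feedoefktl} (and the analyticity of $\Phi$ guaranteed by Theorem~\ref{lem:singpde1}) yield the required $C^m$-regularity on a neighborhood of the origin.
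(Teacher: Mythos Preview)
Your proposal is correct and follows essentially the same approach as the paper: both define $F(\tilde w,\bm{x_n},\bm{x_\infty}) = \tilde w - \Theta(\bm{x_n},\bm{x_\infty})$ (up to sign and argument ordering), compute the partial Fr\'echet derivative $\partial_{\bm{x_n}}\Theta(\bm{0},\bm{0}) = \partial_w\Phi(\bm{0})\,\partial_{\bm{x_n}}\bm{\varphi_n}(\bm{0})$ via the chain rule, verify its invertibility from the two hypotheses, and then invoke the Banach-space implicit function theorem (the paper cites Theorem~4.B in \cite{Ze86}) to obtain both the local inverse and the $C^m$-regularity. Your closing remark on Fr\'echet versus G\^ateaux differentiability is a welcome clarification that the paper leaves implicit.
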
	
\begin{proof}
Consider the mapping $F(\tilde{w},\bm{x_n},\bm{x_\infty}) \!=\! \tilde{w} \! -\! \Theta(\bm{x_n},\bm{x_{\infty}}) \!:\! \mathbb{C}^n \!\times\! \mathbb{C}^n \!\times\! \ell^2 \to \mathbb{C}^n$ with	$F(\bm{0},\bm{0},\bm{0}) = \bm{0}$. The partial Fr$\acute{\text{e}}$chet derivative $F_{\bm{x}_n}$ of $F$ is $F_{\bm{x_n}}(\tilde{w},\bm{x_n},\bm{x_\infty}) = \Theta_{\bm{x_n}}(\bm{x_n},\bm{x_\infty}) = \partial_w\Phi(w)\partial_{\bm{x_n}}\bm{\varphi_n}(\bm{x})$ in view of $\tilde{w} = \Theta(\bm{x_n},\bm{x_\infty})$ (see \cite{Ven21}) and $\det F_{\bm{x_n}}(\bm{0},\bm{0},\bm{0}) =  \det\partial_w\Phi(\bm{0})\partial_{\bm{x_n}}\bm{\varphi_n}(\bm{0}) \neq 0$, because $\Phi(w)$ is a local change of coordinates around $w = 0$ and $\det\partial_{\bm{x_n}}\bm{\varphi_n}(\bm{0}) \neq 0$, i.e., $F_{\bm{x_n}}(\bm{0},\bm{0},\bm{0})$ is boundedly invertible. Since $F$ and $F_{\bm{x_n}}$ are continuous at $(\bm{0},\bm{0},\bm{0})$, Theorem 4.B in \cite{Ze86} shows that \eqref{invtraf} exists. Furthermore, if $F$ is a $C^m$-map on a neighborhood of $(\bm{0},\bm{0},\bm{0})$, then \eqref{invtraf} is also a $C^m$-map on a neighborhood of $(\bm{0},\bm{0},\bm{0})$.
\end{proof}	
The following theorem presents the conditions to ensure a stable origin for the closed-loop system \eqref{cl}.
\begin{thm}[Closed-loop stability]\label{thm:clstab}\hfill
Introduce the state space $\mathbb{X}_{\text{cl}} = \mathbb{C}^n\oplus\ell_2$ for \eqref{kloop} endowed with the norm $\|\cdot\|_{\text{cl}} = (\|\cdot\|^2_{\mathbb{C}^n} + \|\cdot\|^2_{\ell_2})^{1/2}$. Assume that for the error $e_{\tilde{w}}(\tilde{\omega},x)$ in \eqref{etilde} the uniform Lipschitz condition
\begin{equation}\label{Lip1}
\|e_{\tilde{w}}(h_1,x) - e_{\tilde{w}}(h_2,x)\|_{\mathbb{C}^n} \leq L_1\|h_1 - h_2\|_{\mathbb{C}^n}
\end{equation}
for all  $x \in \{h \in D(\theta) = D(\mathcal{F}) \,|\, \|h\| \leq r\}$,  $\|h_i\|_{\mathbb{C}^n} \leq r$, $i = 1,2$, with a nonnegative constant $L_1(r)$ holds and that the inverse transformation $\Theta^{-1}(\tilde{w},\bm{x_{\infty}})$ in \eqref{invtraf} satisfies the Lipschitz condition
\begin{equation}\label{thetalip}
\|\Theta^{-1}(h_1) - \Theta^{-1}(h_2)\|_{\mathbb{C}^n} \leq L_2 \|h_1 - h_2\|_{\text{cl}}
\end{equation}
with $L_2(r) > 0$ for all  $h_i \in B_r = \{h \in \mathbb{X}_{\text{cl}} \,|\, \|h\|_{\text{cl}} \leq r\}$, $i = 1,2$. In addition, $\bm{\varphi_n}[x]$ is required to be locally Lipschitz around the origin.

If $\alpha_0 = \max_{\lambda \in \sigma(\tilde{A})}\operatorname{Re}\lambda+ L_1^2 < 0$ with  $\epsilon > 0$  be such that $\alpha_{\text{cl}} = \alpha_0  + \epsilon$ satisfies $\lambda_{n+1} < \alpha_{\text{cl}} < 0$, then the origin of the closed-loop system \eqref{kloop} is locally exponentially stable for $\|x(0)\| \leq cr$, $0 < c < 1$, i.e., $\|x(t)\| \leq M\e^{\tilde{\alpha}_{\text{cl}}t}\|x(0)\|$, $t \geq 0$, where $M \geq 1$ and $0 < \tilde{\alpha}_{\text{cl}} < \alpha_{\text{cl}}$.	
\end{thm}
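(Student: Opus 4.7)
The plan is to verify local exponential stability by analyzing the closed-loop system in the coordinates $(\tilde{w},\bm{x_\infty}) \in \mathbb{X}_{\text{cl}}$ provided by the change of coordinates \eqref{modtraf}, which is well-defined in a neighborhood of the origin by Lemma \ref{lem:coord}. In these coordinates the closed-loop system takes the cascaded form \eqref{cl}: the finite-dimensional $\tilde{w}$-subsystem is perturbed by $e_{\tilde{w}}$ but linearly decoupled from $\bm{x_\infty}$, while the infinite-dimensional $\bm{x_\infty}$-subsystem is driven by $\tilde{w}$ through the linear coupling $-\phi_i(1)k\t\tilde{w}$ and through the nonlinearity $f_i(\tilde{w},\bm{x_\infty})$. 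I would prove decay of a composite Lyapunov functional $V(\tilde{w},\bm{x_\infty}) = \gamma_1 V_1(\tilde{w}) + \gamma_2 V_2(\bm{x_\infty})$ on $\mathbb{X}_{\text{cl}}$ and then transport the bound back to the $L_2$-norm of $x$.

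For the finite-dimensional part \eqref{finlin} I would take $V_1(\tilde{w}) = \tilde{w}^* P\tilde{w}$ with $P\succ 0$ solving a Lyapunov equation that certifies the decay rate $-\max_{\lambda\in\sigma(\tilde{A})}\operatorname{Re}\lambda$ of the Hurwitz matrix $\tilde{A}$. Applying \eqref{Lip1} and Young's inequality to the perturbation term $2\operatorname{Re}(\tilde{w}^*Pe_{\tilde{w}})$ to absorb the $L_1$-contribution produces $\dot{V}_1 \leq 2\alpha_0 V_1$ uniformly for $x$ in the ball of radius $r$, and hence $\|\tilde{w}(t)\| \leq M_1 \e^{\alpha_0 t}\|\tilde{w}(0)\|$ as long as the trajectory stays in that ball. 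For the infinite-dimensional part \eqref{infsub} I would take $V_2(\bm{x_\infty}) = \|\bm{x_\infty}\|_{\ell_2}^2$ and exploit that $\operatorname{diag}(\lambda_i)_{i>n}$ generates a $C_0$-semigroup on $\ell_2$ with growth bound $\lambda_{n+1}$, yielding the linear dissipation $\sum_{i>n}\lambda_i|x_i|^2 \leq \lambda_{n+1}\|\bm{x_\infty}\|_{\ell_2}^2$. The coupling $-\phi_i(1)k\t\tilde{w}$ is controlled using Parseval-type estimates together with the uniform boundedness of $(\phi_i(1))$ implied by the Riesz basis property, while $f_i = \langle f(\cdot,x),\phi_i\rangle$ is estimated via $f(z,0) = f'(z,0) = 0$ and the analyticity of $f$, which give the quadratic bound $\|f(\cdot,x)\| \leq \tilde{C}(r)\|x\|^2$ and hence $(\sum_{i>n}|f_i|^2)^{1/2} \leq \tilde{C}\|x\|^2$ locally.

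Combining the two estimates, I would tune the weights $\gamma_1, \gamma_2$ so that the indefinite cross-terms generated by the linear coupling and by the $x$-dependence of $e_{\tilde{w}}$ are dominated by the two diagonal dissipation terms, and invoke the gap $\lambda_{n+1} < \alpha_{\text{cl}} < 0$ together with the freedom in $\epsilon$ to conclude $\dot{V} \leq 2\tilde{\alpha}_{\text{cl}} V$ locally. A standard invariance argument then shows that choosing $\|x(0)\| \leq cr$ with $c<1$ small keeps $x(t)$ in the ball of radius $r$ for all $t\geq 0$, so the local Lipschitz and nonlinearity bounds remain valid along the whole trajectory. Finally I would transfer the bound on $\|(\tilde{w},\bm{x_\infty})\|_{\text{cl}}$ back to a bound on $\|x\|$ through \eqref{xexpand}, using \eqref{thetalip} for $\Theta^{-1}$ together with the assumed local Lipschitz continuity of $\bm{\varphi_n}[x]$ to get norm-equivalence near the origin. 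The main obstacle I anticipate is closing the coupling loop: $\bm{x_\infty}$ enters $e_{\tilde{w}}$ through \eqref{etilde} and thereby feeds back into the $\tilde{w}$-equation, so the composite Lyapunov argument must carefully balance the $L_1^2$-perturbation of the $\tilde{w}$-dynamics against the quadratic nonlinearity in the $\bm{x_\infty}$-dynamics, which is where the condition $\alpha_0 = \max_{\lambda\in\sigma(\tilde{A})}\operatorname{Re}\lambda + L_1^2 < 0$ and the separation $\lambda_{n+1}<\alpha_{\text{cl}}$ become essential.
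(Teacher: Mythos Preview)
Your overall strategy---working in the $(\tilde{w},\bm{x_\infty})$ coordinates, treating \eqref{cl} as a cascade, proving decay there, and transferring back via \eqref{xexpand}, \eqref{thetalip} and the Lipschitz property of $\bm{\varphi_n}$---matches the paper. The paper, however, does not use a composite Lyapunov functional: it works directly with the mild-solution formula (variation of constants) for each subsystem, squares, and closes the loop by two successive applications of Gronwall's inequality, together with a contradiction argument to show the escape time from $B_r$ is infinite. Both routes are viable in principle, but your Lyapunov sketch has a concrete gap in the treatment of the boundary coupling.

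The term $-\phi_i(1)k\t\tilde{w}$ in \eqref{infsub} cannot be controlled by ``Parseval-type estimates together with the uniform boundedness of $(\phi_i(1))$''. The sequence $(\phi_i(1))_{i>n}$ is bounded but is \emph{not} in $\ell_2$ (for $a\equiv 0$ one has $\phi_i(1)=\sqrt{2}(-1)^{i-1}$), so the input operator $\tilde{w}\mapsto (-\phi_i(1)k\t\tilde{w})_{i>n}$ is unbounded from $\mathbb{C}^n$ into $\ell_2$. With $V_2=\|\bm{x_\infty}\|_{\ell_2}^2$ the cross-term $2|k\t\tilde{w}|\,\big|\sum_{i>n}\phi_i(1)\bar{x}_i\big|$ is not dominated by $\lambda_{n+1}\|\bm{x_\infty}\|_{\ell_2}^2$ alone, and the Riesz basis property gives you nothing here. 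What is actually needed---and what the paper uses explicitly---is the Sturm--Liouville spectral asymptotics $|\lambda_i|\in O(i^2)$: in the paper's integral estimate the coupling produces $\sum_{i>n}\int_0^t\e^{2\lambda_i(t-\tau)}\|\tilde{w}(\tau)\|^2\,\d\tau$, and convergence of $\sum_{i>n}|\alpha_{\text{cl}}-\lambda_i|^{-1}$ is precisely what makes this finite. In a Lyapunov argument the same mechanism would have to appear as a mode-wise Young inequality $|\phi_i(1)k\t\tilde{w}\,x_i|\le \tfrac{|\lambda_i|}{4}|x_i|^2+\tfrac{|\phi_i(1)|^2}{|\lambda_i|}|k\t\tilde{w}|^2$, after which $\sum_{i>n}|\phi_i(1)|^2/|\lambda_i|<\infty$ closes the estimate. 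Without invoking $|\lambda_i|\in O(i^2)$ your $V_2$-step does not go through.
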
	
\begin{proof}
The solution of \eqref{finlin} satisfies $\|\tilde{w}(t)\|^2_{\mathbb{C}^n} \leq 2(c_1\e^{2\alpha t}\|\tilde{w}(0)\|^2_{\mathbb{C}^n} + L_1^2\int_0^t\e^{2\alpha(t-\tau)}\|\tilde{w}(\tau)\|^2_{\mathbb{C}^n}\d\tau)$, in which \eqref{Lip1}, $\alpha = \alpha_{\text{cl}} - L_1^2$ and $c_1 > 0$ have been utilized. By making use of the Gronwall inequality (see, e.g., \cite[App. A]{Ka02}), it is readily verified that
\begin{equation}\label{wtilsol}
	\|\tilde{w}(t)\|^2_{\mathbb{C}^n} \leq 2c_1\e^{2\alpha_{\text{cl}}t}\|\tilde{w}(0)\|^2_{\mathbb{C}^n}, \quad t \geq 0,
\end{equation}
holds. The solution of \eqref{infsub} can be estimated by
\begin{align}
	\|\bm{x_\infty}(t)\|_{\ell_2}^2 &\leq  2\e^{2\lambda_{n+1}t}\|\bm{x_\infty}(0)\|_{\ell_2}^2\nonumber\\
	&\quad + 4c_2C^2L_2^2\int_0^t\e^{2\lambda_{n+1}(t-\tau)}\|x_{\text{cl}}(\tau)\|^2_{\text{cl}}\d\tau\nonumber\\
	&\quad + 4c_3\sum_{i=n+1}^{\infty}\int_0^t\e^{2\lambda_i(t-\tau)}\|\tilde{w}(\tau)\|^2_{\mathbb{C}^n}\d\tau\label{xinfsol} 
\end{align}
in view of \eqref{Lcond} and \eqref{thetalip} for $c_2, c_3 > 0$. Using \eqref{wtilsol} in the last line of \eqref{xinfsol} gives
$8c_1c_3\sum_{i=n+1}^{\infty}\int_0^t\e^{2\lambda_i(t-\tau)}\|\tilde{w}(\tau)\|^2_{\mathbb{C}^n}\d\tau
\leq 4c_1c_3\sum_{i=n+1}^{\infty}(\e^{2\alpha_{\text{cl}}t}/(|\alpha_{\text{cl}} - \lambda_i|)\|\tilde{w}(0)\|^2_{\mathbb{C}^n}
\leq c_4\e^{2\alpha_{\text{cl}}t}\|\tilde{w}(0)\|^2_{\mathbb{C}^n}$ for finite $c_4 > 0$, because $|\lambda_i| \in O(i^2)$ holds for the Sturm-Liouville operator $-\mathcal{A}$ (see \cite{Orl17a}) so that the infinite sum is convergent. By combining this result with \eqref{wtilsol}, \eqref{xinfsol} and taking $\lambda_{n+1} < \alpha_{\text{cl}}$ into account, a simple calculation yields
\begin{multline}\label{xcdsol}
	\|x_{\text{cl}}(t)\|^2_{\text{cl}} \leq M_{\text{cl}}(\e^{2\alpha_{\text{cl}}t}\|x_{\text{cl}}(0)\|_{\text{cl}}^2\\ + C^2\int_0^t\e^{2\alpha_{\text{cl}}(t-\tau)}\|x_{\text{cl}}(\tau)\|_{\text{cl}}^2\d\tau)
\end{multline}
in $B_r$ for $t \geq 0$, $M_{\text{cl}} \geq 1$ and $x_{\text{cl}} = \col{\tilde{w},x_\infty}$. Assume that $x_{\text{cl}}(t) \in B_r$, $0 \leq t < t_1$. Then, the escape time $t_1$ from the set $B_r$ can either be finite or infinite. In order to prove that only the last case is possible for $r$ sufficiently small consider $r^2 = \|x_{\text{cl}}(t_1)\|^2_{\text{cl}}$, which leads  to $r^2 = M_{\text{cl}}\e^{2\alpha_{\text{cl}}t_1}(\|x_{\text{cl}}(0)\|_{\text{cl}}^2 \!+\! C^2\int_0^{t_1}\e^{-2\alpha_{\text{cl}}\tau}\|x_{\text{cl}}(\tau)\|_{\text{cl}}^2\d\tau)$ in view of \eqref{xcdsol}. With $\|x_{\text{cl}}(t)\|_{\text{cl}}^2 \leq r^2$, $0 \leq t < t_1$, the result
\begin{align}\label{p2:est}
	r^2 &\leq M_{\text{cl}}\e^{2\alpha_{\text{cl}} t_1}(\|x_{\text{cl}}(0)\|^2_{\text{cl}} + \textstyle\frac{C^2r^2}{2\alpha_{\text{cl}}}
	(1 - \e^{-2\alpha_{\text{cl}} t_1}))\nonumber\\
	&\leq M_{\text{cl}}(\|x_{\text{cl}}(0)\|^2_{\text{cl}} + \textstyle\frac{C^2r^2}{2\alpha_{\text{cl}}})
\end{align}
follows from \eqref{xcdsol} for $t = t_1$. If $\|x_{\text{cl}}(0)\|^2_{\text{cl}} \leq (r^2q)/M_{\text{cl}}$, $0 < q < 1$, then $x_{\text{cl}}(0) \in B_r$ in view of  $M_{\text{cl}} \geq 1$. With this, \eqref{p2:est} becomes $r^2 \leq (q + (C^2M_{\text{cl}}/(2\alpha_{\text{cl}}))r^2$. If $C^2 < ((1-q)2\alpha_{\text{cl}})/M_{\text{cl}}$, that is possible for sufficiently small $r$ in view of \eqref{Lcond}, then $r^2 < (q + 1-q)r^2 = r^2$ which is a contradiction. Thus, the escape time $t_1$ is infinite. This means that for sufficiently small $r$ the state $x_{\text{cl}}(t)$ remains in $B_r$ for $t \geq 0$ if $\|x_{\text{cl}}(0)\|_{\text{cl}} \leq r\sqrt{q/M_{\text{cl}}}$. Then, the local Lipschitz conditions \eqref{Lcond}, \eqref{Lip1} and \eqref{thetalip} are valid for $t \geq 0$. Under this condition it is shown next that the state $x_{\text{cl}}$ converges exponentially to the origin. Since $x_{\text{cl}}(0) \in B_r$ implies $x_{\text{cl}}(t) \in B_r$, $t \geq 0$, one can use \eqref{xcdsol} for $t \geq 0$ so that with $C^2M_{\text{cl}} < 2(1-q)\alpha_{\text{cl}}$ the result $\e^{-2\alpha_{\text{cl}} t}\|x_{\text{cl}}(t)\|_{\text{cl}}^2 \leq M_{\text{cl}}(\|x_{\text{cl}}(0)\|_{\text{cl}}^2 + \int_0^{t}2(1-q)\alpha_{\text{cl}}\e^{-2\alpha_{\text{cl}}\tau}\|x_{\text{cl}}(\tau)\|_{\text{cl}}^2\d\tau)$ follows. By using the \emph{Gronwall's Lemma} (see, e.g., \cite[Lem. A.5.30]{Cu20}) it can be verified that $\e^{-2\alpha_{\text{cl}} t}\|x_{\text{cl}}(t)\|_{\text{cl}}^2 \leq M_{\text{cl}}\|x_{\text{cl}}(0)\|^2_{\text{cl}}\exp(\int_0^t2(q-1)\alpha_{\text{cl}}\d\tau)\linebreak = M_{\text{cl}}\|x_{\text{cl}}(0)\|_{\text{cl}}^2\e^{2(q-1)\alpha_{\text{cl}}t}$ holds, which yields $\|x_{\text{cl}}(t)\|_{\text{cl}}^2 \leq  M_{\text{cl}}\e^{2q\alpha_{\text{cl}}t}\|x_{\text{cl}}(0)\|_{\text{cl}}^2$. Since $0 < q < 1$, this shows $\|x_{\text{cl}}(t)\|_{\text{cl}} \leq \sqrt{M_{\text{cl}}}\e^{\tilde{\alpha}_{\text{cl}}t}\|x_{\text{cl}}(0)\|_{\text{cl}}$, $t \geq 0$, where $\sqrt{M_{\text{cl}}} \geq 1$ and $0 < \tilde{\alpha}_{\text{cl}} < \alpha_{\text{cl}}$. Since the sequence $\{\phi_i, i \in \mathbb{N}\}$ is an orthonormal Riesz basis for $\mathbb{X}$ (see \cite{Del03}), the \emph{Parseval's equality} $\|\sum_{i=1}^{\infty}h_i\phi_i\|^2 = \sum_{i=1}^{\infty}|h_i|^2 = \|h\|^2_{\ell_2}$ is satisfied. This, $M_{\text{cl}} \geq 1$ and $0 < q < 1$ lead to $\|x(0)\| \leq cr$, $0 < c <1$, implied by $\|x_{\text{cl}}(0)\|_{\text{cl}} \leq \sqrt{q/M_{\text{cl}}}r$. Then,  \eqref{xexpand}, \eqref{thetalip} as well as the fact that both the transformation $\Phi$ and $\bm{\varphi_n}[x]$ are locally Lipschitz around the origin can be used to show exponential stability by means of a straightforward calculation.
\end{proof}	

\section{Data-Driven Solution of the Koopman Eigenvalue Problem}\label{sec:dewp}
\subsubsection{Extended Dynamic Mode Decomposition} One of the advantages when using the eigenfunctionals for the data-driven control is that they can be determined on the basis of the Koopman operator $\mathscr{K}(t)$ rather than the Koopman generator $\mathscr{A}$ (cf. \eqref{Kopewp} and \eqref{kewpp}). This avoids to use state derivative data for the data-driven solution of the corresponding eigenvalue problem as it is necessary for determining the Koopman generator $\mathscr{A}$ (see, e.g., \cite{Klus20} for the ODEs). Therefore, in what follows the \emph{extended dynamic mode decomposition (eDMD)} for determining the K-eigenvalues $\lambda_i$, $i \in \mathbb{N}$, w.r.t. the eigenfunctionals $\varphi_i[x]$ on the basis of the approximation of the Koopman operator $\mathscr{K}(t)$ in \eqref{kewpp} is shortly reviewed. The presented results follow the approach already given in \cite{Mau21}, which is a generalization of the results in \cite{Wi15} for finite-dimensional systems. More details concerning DMD and eDMD can be found in \cite{Ku16}.

For the data-driven approach to determine the eigenvalues and \linebreak -functionals of $\mathscr{K}(t)$  and thus of $\mathscr{A}$ a set of linearly independent basis functionals $\psi_i[x] \in D(\mathscr{A})$, $i = 1,\ldots,L$, is introduced and collected in the \emph{dictionary} $\mathcal{D} = \{\psi_1[x],\ldots,\psi_L[x]\}$ so that $\mathbb{O}_L = \operatorname{span}\mathcal{D} \subset \mathbb{O}$. Then, any observable $\theta[x] \in \mathbb{O}_L$ can be uniquely represented by $\theta[x] = \sum_{i=1}^La_i\psi_i[x] = \kappa\t\col{\psi_1[x],\ldots,\psi_L[x]} =  \kappa\t\psi[x]$ with $\kappa \in \mathbb{R}^L$ and the \emph{feature map} $\psi[x]: D(\theta) = D(\mathcal{F}) \to \mathbb{R}^L$, that maps elements of $D(\theta)$ to the \emph{feature space} $\mathbb{R}^L$. With this, the value of $\theta[x] \in \mathbb{O}_L$ at a \emph{sampling time} $t_s > 0$ is 
\begin{equation}\label{Kadv}
 \mathscr{K}(t_s)\theta[x] \!=\! \theta[\mathcal{T}_{\mathcal{F}}(t_s)x] \!=\! \kappa\t\psi[x(t_s)] \!=\! \kappa\t\hat{\mathscr{K}}\psi[x] \!+\! r(x),
\end{equation}
in which $\hat{\mathscr{K}} \in \mathbb{C}^{L \times L}$ is the \emph{Koopman matrix}. This is a finite-dimensional approximation for the Koopman operator $\mathscr{K}(t_s)$ at $t = t_s$. Since, in general, $\mathbb{O}_L$ is not an invariant subspace for $\mathscr{K}(t)$, there appears a \emph{residual} $r(x)$. Hence, $\hat{\mathscr{K}}$ is only an approximation for $\mathscr{K}(t)$, whose accuracy is quantified by the residual $r(x)$. Consequently, $\hat{\mathscr{K}}$ has to be determined to render $r$ sufficiently small. Since the semiflow $\mathcal{T}_{\mathcal{F}}(t)$ is not known, this is achieved by making use of available data. For this, consider the \emph{snapshot pairs} $\{(x^i,\mathcal{T}_{\mathcal{F}}(t_s)x^i)\}_{i = 1}^M$, in which $x^i$ is a state trajectory at time instant $t_i$. Then, the data pairs $\{(\psi_j[x^i],\psi_j[\mathcal{T}_{\mathcal{F}}(t_s)x^i])\}_{i = 1}^M$, $j = 1,\ldots,L$, can be introduced, which are obtained from simulations or from measurement data. By defining $(x^i)^+ = \mathcal{T}_{\mathcal{F}}(t_s)x^i$ and $\underline{x} = \col{x^1,\ldots,x^M}$ the relation \eqref{Kadv} evaluated for the considered data takes the form $\kappa\t\Psi(\underline{x}^+) = \kappa\t\hat{\mathscr{K}}\Psi(\underline{x}) + \underline{r}\t(\underline{x})$ with the data matrix
$\Psi(\underline{x}) = [\psi[x^1], \ldots, \psi[x^M] \in \mathbb{R}^{L \times M}$ and the residual vector $\underline{r}(\underline{x}) = \col{r(x^1),\ldots,r(x^M)} \in \mathbb{R}^M$. For solving this linear regression problem, the error $\underline{r}(\underline{x})$ is minimized by the least-square solution
\begin{equation}\label{matK3}
 \hat{\mathscr{K}} = \Psi(\underline{x}^+)\Psi\t(\underline{x})(\Psi(\underline{x})\Psi\t(\underline{x}))^{\dagger},
\end{equation}
where $(\cdot)^{\dagger}$ is the Moore-Penrose inverse. 
\begin{rem}
For finite-dimensional nonlinear systems convergence of the eDMD in the large data limit is verified in \cite{Kor18}, while a probabilistic finite-data error bound is derived in \cite{Nue23,Sch23}. It is expected that under suitable conditions, also the generalized eDMD shows similar convergence properties. However, this has not been investigated in the literature so far and is out of the scope for this contribution. 
	\hfill $\triangleleft$	 
\end{rem}

\subsubsection{Approximation of the K-Eigenvalues and -functionals} Once the Koopman matrix $\hat{\mathscr{K}}$ has been determined, an approximation of the K-eigenvalues and -functionals can be easily obtained. Consider
\begin{equation}\label{Kopewpapr}
\mathscr{K}(t_s)\theta_i[x] = \mathscr{K}(t_s)w_i\t\psi[x] = w_i\t\hat{\mathscr{K}}\psi[x] + r_i(x)
\end{equation} 
for $\theta_i[x] \in \mathbb{O}_L$, where $w_i\t$ follows from solving the eigenvalue problem $w\t_i\hat{\mathscr{K}} = \mu_iw\t_i$, $i = 1, \ldots, L$. Inserting this in \eqref{Kopewpapr} yields $\mathscr{K}(t_s)w_i\t\psi[x] = \mu_iw_i\t\psi[x] + r_i(x)$, $i = 1,\ldots,L$. Consequently, an approximation of the eigenfunctionals is given by
\begin{equation}\label{approxefktl}
\hat{\varphi}_i[x] =  w_i\t\psi[x], \quad i = 1,\ldots,L,
\end{equation}
w.r.t. the approximate K-eigenvalue $\mu_i$. In view of \eqref{Kopewp}, the corresponding approximation for the eigenvalues of the Koopman generator $\mathscr{A}$ is $\hat{\lambda}_i = \ln \mu_i/t_s$.

\subsubsection{Approximation of $\rho$, b and N}
With the state-derivative sample $\dot{x}(z,t_0)$ for $u(t) = u_0$, $t \geq 0$, one can determine $\rho$ from \eqref{nsys} by $\hat{\rho} = (\d_t\hat{\varphi}_i[x^j] - \hat{\lambda}_i\hat{\varphi}_i[x^j])(\delta_x\hat{\varphi}_i[x^j](1)u_0)^{-1}$ for some open-loop eigenfunctional $\hat{\varphi}_i[x]$, $i = 1,\ldots,L$, at some snapshot $x^j$, $j = 1,\ldots,M$. The time derivative $\dot{x}(z,t_0)$  can be obtained from the state data using, e.g., finite differences. In the presence of noise the method proposed in \cite{Br16a} can be applied.

As an alternative to the analytic approach in Theorem \ref{lem:optbilin} for optimal bilinearization, the following data-driven approach to determine $b$ and $N$ using the obtained open-loop eigenfunctionals $\hat{\varphi}_i[x]$ can be used. For this, consider the error $e_u[x] =  \rho(\delta_x\bm{\varphi_n}[x])(1) - [b,N]\bm{\bar{\varphi}_n}[x]$, where $\bm{\bar{\varphi}_n}[x] = \col{1,\bm{\varphi_n}[x]}$.
Evaluating this error at the data yields
\begin{multline}
	E_u[\underline{x}] = 
	\underbrace{\begin{bmatrix}
			\hat{\rho}\delta_x\bm{\hat{\varphi}_n}[x^1](1) & \ldots & \hat{\rho}\delta_x\bm{\hat{\varphi}_n}[x^M](1)
	\end{bmatrix}}_{\mathbb{D}[\underline{x}]}\\
	-  
	\begin{bmatrix}
		b & N
	\end{bmatrix}
	\underbrace{\begin{bsmallmatrix}
			1                  & \ldots & 1\\
			\bm{\hat{\varphi}_n}[x^1]  & \ldots & \bm{\hat{\varphi}_n}[x^M]
	\end{bsmallmatrix}}_{\mathbb{F}[\underline{x}]},
\end{multline}
in which the approximations $\hat{\varphi}_i[x]$  and $\hat{\rho}$ have been utilized. Hence, the solution minimizing the Frobenius norm $\|E_u[\underline{x}]\|_F$ is given by $[b,N] =  \mathbb{D}[\underline{x}]\mathbb{F}^\dagger[\underline{x}]$. Note that the error $E_u[\underline{x}]$ can be evaluated using only data. Hence, a small $\|E_u[\underline{x}]\|_F$ is an indicator for a suitable choice of $M$. The resulting error $e_u[x]$ and also $e_x[x]$ following from the approximation of the Koopman eigenvalues and eigenfunctionals is taken into account in the stability analysis of Theorem \ref{thm:clstab}. Then, an eigenvalue assignment for the matrix $\tilde{A}$ ensuring $\max_{\lambda \in \sigma(\tilde{A})}\operatorname{Re}\lambda+ L_1^2 < 0$ yields a closed-loop system with a locally exponentially stable origin.

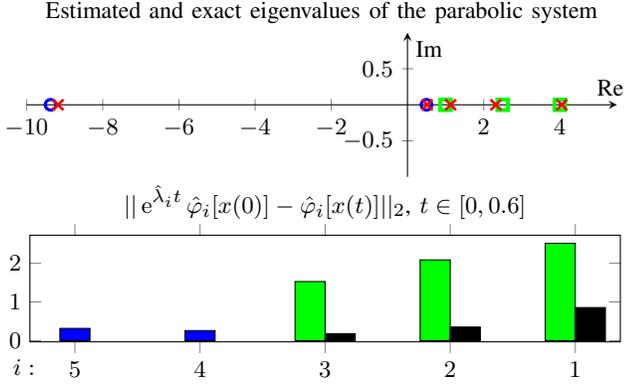
\begin{figure}[t]
	\begin{tikzpicture}
	\begin{axis}[yticklabel style={text width=3em,align=right},
	title={Estimated and exact eigenvalues of the parabolic system},
	every axis title shift = 0.1mm,
	height=3.5cm,
	width = 0.52\textwidth, enlarge x limits=0.05, ybar=5mm,
	xmax=5.5,
	xmin=-10,
	axis lines = middle,
	every axis x label/.append style={xshift=2mm, yshift=0mm},
	xlabel=Re,
	ylabel=Im,
	ytick={-0.5, 0.5}
	]
	\addplot [only marks,mark=o, blue, line width=1.2pt] table{data/realEvs.dat};\label{realEv}
	\addplot [only marks,mark=square, green, opacity=1, line width=1.5pt] coordinates{(1, 0) (2.5, 0) (4, 0)};
	0		 						\addplot [only marks,mark=x,mark size=1mm, red, opacity=0.8, line width=1.0pt] table{data/dmdEvs.dat};\label{dmdEv}
	\end{axis};	
	\end{tikzpicture}\\
	\vspace{-4mm}\\\hspace*{-8.4mm}
	\begin{tikzpicture}
	\begin{axis}[yticklabel style={text width=3em,align=right},
	ybar,
	title={$||\operatorname{e}^{\hat{\lambda}_i t}\hat{\varphi}_i[x(0)] - \hat{\varphi}_i[x(t)]||_2$, $t \in [0,0.6]$},
	title style={yshift=1mm},
	every axis title shift = 0.1mm,
	height=3cm,
	width = 0.52\textwidth,
	xlabel=$i:$,
	every axis x label/.append style={xshift=-40mm, yshift=4.5mm},
	enlargelimits=0.09,
	xtick={5, 4, 3, 2, 1},
	x dir=reverse,]
	\addplot[bar shift=-2mm, fill=green, bar width=4mm]  table[x=t, y = y1]{data/phiErr.dat};
	\addplot[bar shift=2mm, fill=black, bar width=4mm]  table[x=t, y = y1]{data/phiErr3.dat};
	\addplot[bar shift=0mm, fill=blue, bar width=4mm]  table[x=t, y = y1]{data/phiErr2.dat};
	\end{axis};	
	\end{tikzpicture}
	\vspace{-2mm}
	\caption{The top plot shows the eigenvalues $\hat{\lambda}_i$, $i = 1,\ldots,5$ (\hspace{-1mm}\redcross\hspace{-1mm}) resulting from the eDMD as well as the exact principal eigenvalues $\lambda_i$, $i = 4,5$ \linebreak (\hspace{-1.0mm}\bluecircle\hspace{-1.0mm}) and non-principal eigenvalues $\lambda_i = \mu_i\lambda_4$, $\mu_i \in \mathbb{N}$, $i = 1,2,3$ \linebreak (\hspace{-1mm}\greencube\hspace{-1mm}) . The bottom plot displays the $L_2$-norms of the prediction error of the eigenpairs $(\hat{\lambda}_i, \hat{\varphi}_i[x]), i = 1, \dots, 5$. For the non-principal eigenvalues also $||\operatorname{e}^{\hat{\lambda}_i t}\hat{\varphi}^{\mu_i}_4[x(0)] - \hat{\varphi}^{\mu_i}_4[x(t)]||_2, i = 1, 2, 3$, is shown in black, since $\varphi_i[x] = \varphi^{\mu_i}_4[x]$.\label{fig:eDMD}}
\end{figure}

\section{Example}\label{sec:ex}
The results of the paper are illustrated for the stabilization of the unstable and thus hyperbolic equilibrium $x = 0$ of a semilinear parabolic system
\eqref{plant} with $\rho = 1$, $a(z) = 0.5$ and $f(z,x) = 0.5x^2$, that exhibits a finite-time blow up.
The system is assumed to be unknown, but state data is available for the controller design. 

\subsubsection{Extended Dynamic Mode Decomposition}
For the data-driven system analysis $L = 27$ basis functionals $\psi_{ikl}[x] = \langle f_i,x^k\rangle^l$, $i,k,l = 1,2,3$, with $f_1(z) = 1$, $f_i(z) = \sqrt{2}\cos(i\pi z)$ are considered, where the powers $l$ account for non-principal eigenvalues (see Remark \ref{rem:prineig}). The data pairs $\{\psi_{ikl}[x_0],\psi_{ikl}[\mathcal{T}_{\mathcal{F}}(t_s)x_0]\}$ with $t_s = 0.1$ were generated in simulations. For this, the function $g(z) = -(a/(\pi\omega)^3)(((\omega\pi)^2(z-1)z-2)\sin(\omega\pi z + \pi \phi_0) + \omega\pi(2z-1)\cos(\omega\pi z + \pi\phi_0)$ with $a = 0.8$, $\omega = 0.42$ and $\phi_0 = \frac{2}{3}$ is introduced, that satisfies the Neumann BCs of \eqref{plant}. Then, with $g_{i} = \langle g,h_i\rangle$ and $h_1(z) = 1$, $h_i(z) = \sqrt{2}\cos(i\pi z)$, $i = 2,\ldots,5$, the ICs $x_0 = \sum_{i=1}^5x_{0,i}h_i(z)$ are generated, where $x_{0,i}$ is uniformly distributed in the set $[g_i-0.04,g_i+0.04]$ to obtain informative data. Then, $M=200$ simulations are used to determine the data matrices $\Psi(\underline{x}) \in \mathbb{R}^{27\times 200}$ and $\Psi(\underline{x}^+) \in \mathbb{R}^{27\times 200}$ so that the Koopman matrix $\hat{\mathscr{K}} \in \mathbb{R}^{27 \times 27}$ follows from \eqref{matK3}. For this the MATLAB function \texttt{pdepe} with 101 spatial sampling points was applied to simulate the semilinear parabolic PDE.  By solving the eigenvalue problem for $\hat{\mathscr{K}}$ in \eqref{Kopewpapr} the estimates $\hat{\lambda}_i$ for the eigenvalues and the corresponding eigenfunctionals \eqref{approxefktl} are computed. The result is shown in Figure \ref{fig:eDMD} (top), where the five largest exact eigenvalues and their estimates are plotted. They consist of the principal eigenvalues $\hat{\lambda}_i$, $i = 4,5$, and the non-principal eigenvalues $\lambda_i = \mu_i\lambda_4$, $i = 1,2,3$, with  $\mu_1 = 8, \mu_2 = 4, \mu_3 = 2$ (cf. Remark \ref{rem:prineig}). The latter appear due to the nonlinear basis functionals. For the data-driven validation of the eigenvalues and -functionals the $L_2$-norm of the prediction errors are plotted in Figure  \ref{fig:eDMD} (bottom) by evaluating the eigenfunctionals on a test trajectory with the IC $x(z, 0) = 0.8$ for the time interval $t\in[0, 0.6]$. The result shows that the principal eigenvalues and -functionals are captured very well and can be used to define the Koopman modal representation \eqref{nsys}. 


\begin{figure}[t]
	\hspace{-7.0mm}\begin{tabular}{l r}
		\begin{tikzpicture}
		\begin{axis}[yticklabel style={text width=3em,align=right},
		title={$\Phi_1(w)({\scalebox{0.8}{\ref{r}}}), \tilde{w}^{\text{lin}}_1({\scalebox{0.8}{\ref{y}}})$},
		every axis title shift = 0.1mm,
		height=3cm,
		width = 0.29\textwidth,
		xmax=5,
		xmin=0,
		ymin=0]
		\addplot [blue, line width=1.2pt] table[x=t, y = y31]{data/testControl.dat};\label{r}
		\addplot [green, densely dashed, line width=1.2pt] table[x=t, y = y11]{data/testControl.dat};\label{y}
		\end{axis};	
		\end{tikzpicture}&\hspace{-11mm}
		\begin{tikzpicture}
		\begin{axis}[yticklabel style={text width=3em,align=right},
		title={$\Phi_1(\boldsymbol{\hat{\varphi}_2}[x])({\scalebox{0.8}{\ref{r}}}), \tilde{w}^{\text{lin}}_1({\scalebox{0.8}{\ref{y}}})$},
		every axis title shift = 0.1mm,
		height=3cm,
		width = 0.29\textwidth,
		ymin=0,
		xmax=5,
		xmin=0]
		\addplot [blue, line width=1.2pt] table[x=t, y = y21]{data/testControl2.dat};
		\addplot [green, densely dashed, line width=1.2pt] table[x=t, y = y11]{data/testControl2.dat};
		\end{axis};	
		\end{tikzpicture}\\
		\begin{tikzpicture}
		\begin{axis}[yticklabel style={text width=3em,align=right},ytick={0.06, 0.04, 0.02, 0}, y tick label style={/pgf/number format/fixed},scaled y ticks = false,
		title={$\Phi_2(w)({\scalebox{0.8}{\ref{r}}}), \tilde{w}^{\text{lin}}_2({\scalebox{0.8}{\ref{y}}})$},
		every axis title shift = 0.1mm,
		height=3cm,
		width = 0.29\textwidth,
		xlabel=$t$,
		every axis x label/.append style={xshift=0mm, yshift=4.0mm},
		xmax=5,
		xmin=0,]
		\addplot [blue, line width=1.2pt] table[x=t, y = y32]{data/testControl.dat};
		\addplot [green, densely dashed, line width=1.2pt] table[x=t, y = y12]{data/testControl.dat};
		\end{axis};	
		\end{tikzpicture}&\hspace{-10mm}
		\begin{tikzpicture}
		\begin{axis}[yticklabel style={text width=3em,align=right},
		title={$\Phi_2(\boldsymbol{\hat{\varphi}_2}[x])({\scalebox{0.8}{\ref{r}}}), \tilde{w}^{\text{lin}}_2({\scalebox{0.8}{\ref{y}}})$},
		every axis title shift = 0.1mm,
		height=3cm,
		width = 0.29\textwidth,
		xlabel=$t$,
		every axis x label/.append style={xshift=0mm, yshift=4.0mm},
		xmax=5,
		xmin=0]
		\addplot [blue, line width=1.2pt] table[x=t, y = y22]{data/testControl2.dat};
		\addplot [green, densely dashed, line width=1.2pt] table[x=t, y = y12]{data/testControl2.dat};
		\end{axis};	
		\end{tikzpicture}
	\end{tabular}
	\vspace{-1mm}
	\caption{The left column shows the comparison of the transformed state trajectory $\Phi(w)$({\scalebox{0.8}{\ref{r}}}) for the closed-loop
		  bilinear system  \eqref{blinappr}, \eqref{sf} and the IC $w(0) = \boldsymbol{\hat{\varphi}_2}[2.4g]$ with the corresponding solution $\tilde{w}^{\text{lin}}$  ({\scalebox{0.8}{\ref{y}}}) of the linear target system \eqref{lintag}. The right column compares the transformed state trajectories $\Phi(\boldsymbol{\hat{\varphi}_2}[x])$ ({\scalebox{0.8}{\ref{r}}}) of the closed-loop PDE system \eqref{plant}, \eqref{sf} for the IC $x_0(z) = 3.1g(z)$ and the corresponding solution of the linear target system \eqref{lintag}.\label{fig:lin}}
\end{figure}
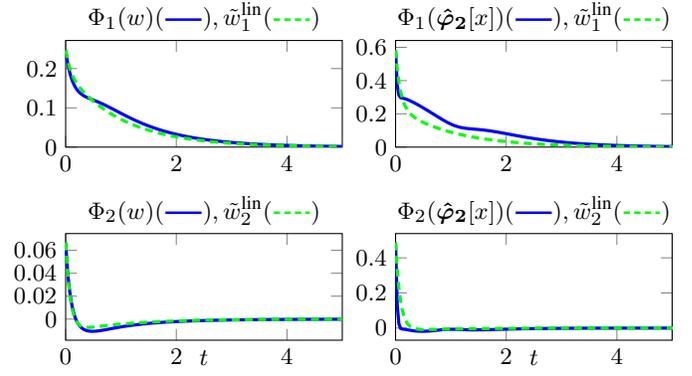
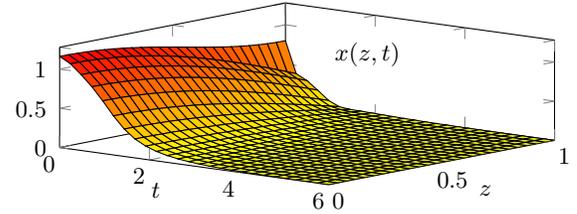
\begin{figure}[h!]
	\centering
	\begin{tikzpicture}	
	\begin{axis}
	[title={$x(z,t)$},
	title style={yshift=-10mm, xshift=8mm},
	width=0.45\textwidth,
	height=40mm,
	view/h = 220,
	every axis x label/.append style={xshift=0mm, yshift=3mm},
	every axis y label/.append style={xshift=0mm, yshift=3mm},
	every axis title/.append style={at={(0.5, 0.93)}},
	x dir=reverse,
	y dir=reverse,
	xlabel={$t$},
	ylabel={$z$},
	every x tick label/.append style={xshift=0.4mm, yshift=0.0mm},
	every y tick label/.append style={xshift=-0.6mm, yshift=0.0mm},
	every z tick label/.append style={xshift=-0.6mm, yshift=0.0mm},
	baseline,
	trim axis left, trim axis right,
	clip,
	every axis plot/.append style={},
	zmin=0,
	]
	\addplot3[surf, mesh/ordering=y varies, z buffer=none, mesh/cols=26, mesh/check=false,faceted color=black, point meta min=-0.5] table {./data/stable.dat};
	\end{axis}
	\end{tikzpicture}
	\caption{State profile $x(z,t)$ of the closed-loop PDE system \eqref{plant}, \eqref{sf} for the IC $x_0(z) = 3.1g(z)$.\label{fig:sprof}}
\end{figure}

\subsubsection{Feedback Linearization}
The bilinear system \eqref{nsys2} is obtained by choosing the estimated principal eigenvalues $\hat{\lambda}_i$, $i = 4,5$, and their eigenfunctionals to determine \eqref{nsys} for $n = 2$. Then, using the approach in Section \ref{sec:dewp} the matrices $b$ and $N$ follow from the data with $M = 200$, where the functional derivatives are approximated by utilizing multivariable Legendre polynomials. The resulting error norm $\|E_u[\underline{x}]\|_F = 0.004$ validates the choice of $M$. This results in the data-driven bilinear approximation model
\begin{equation}\label{blinappr}
\dot{w}(t) = \Lambda_2w(t) + (b + Nw(t))u(t)
\end{equation}
with
\begin{subequations}\label{blinmat}
\begin{align}
\Lambda_2 &= \begin{bsmallmatrix}
0.51769 & 0\\
0        & -9.1727
\end{bsmallmatrix}, \quad 
b = \begin{bsmallmatrix}
0.74563\\
0.037571
\end{bsmallmatrix}\\
N &= \begin{bsmallmatrix}
-1.4184 & 1.1191\\
0.026323 & -0.50604
\end{bsmallmatrix}.
\end{align}	 
\end{subequations}
In simulations it is verified that \eqref{blinmat} is a good approximation of \eqref{nsys} meaning that the state error $e_x[x]$ and the input error $e_u[x]$ are small if the deviation of $x$ from the origin is not too large. The data-driven state feedback controller \eqref{sf} is computed by solving the singular PDE \eqref{spde21} for \eqref{blinappr}. To this end, the matrix $\tilde{A} = \Lambda_2 - bk\t$ is assigned, which is obtained from the eigenvalue placement $\sigma(\tilde{A}) = \{-1,-12\}$ with the eigenvalues and the feedback gain $k\t$ satisfying all conditions of Theorem \ref{lem:singpde1}. Note that then all principal closed-loop eigenvalues are located in the open left half-plane. The corresponding linear target system reads
\begin{equation}\label{lintag}
 \dot{\tilde{w}}^{\text{lin}}(t) = \tilde{A}\tilde{w}^{\text{lin}}(t). 
\end{equation}
Using the Galerkin method in \cite{Deu08b} an approximate solution of \eqref{spde21} with minimal $L_2$-norm on a given interval $w \in I_w$ is found by employing multivariable Legendre polynomials. The approximation interval $I_w = [-\frac{13}{20},\frac{13}{20}]\times[-0.1,0.1]$ is chosen, in which the state $w(t)$, $t \geq 0$, of the closed-loop system \eqref{blinappr} and \eqref{sf} evolves for the considered IC $w(0) = \boldsymbol{\hat{\varphi}_2}[2.4g]$. Then, using the Galerkin approach with Legendre polynomials up to degree $n_{\text{leg}} = 11$ results in the $L_2$-norm $\|r_{n_{\text{leg}}}\|_2 = 3.7412\cdot10^{-14}$ of the approximation error for the solution of \eqref{spde21}. 

\subsubsection{Simulations}
Firstly, the feedback linearization of the closed-loop bilinear system \eqref{blinappr}, \eqref{sf} is investigated in a simulation. For this, the transformed closed-loop solution $\Phi(w)$ is compared with the solution $\tilde{w}^{\text{lin}}$ of the linear target system \eqref{lintag}. Figure \ref{fig:lin} verifies a good linearization result for the IC $w(0) = \boldsymbol{\hat{\varphi}_2}[2.4g]$ in the left column. By increasing the IC to $x_0(z) = 3.1g(z)$ the right column displays the closed-loop solution of the PDE system  \eqref{plant}, \eqref{sf} in the linearizing coordinates. The appearing deviations are attributed to the error $e_{\tilde{w}}(\tilde{w}(t),x(t))$, which is comparatively small. The evolution of the state profile $x(z,t)$ is depicted in Figure \ref{fig:sprof} and shows a good stabilization performance for the semilinear parabolic PDE. In simulations also a linear controller $u = - k\t \bm{\hat{\varphi}_2}[x]$ was tested, which also assigns the  eigenvalues $-1$ and $-12$ to the linearization of \eqref{blinappr}. This controller, however, was not able to stabilize the equilibrium $x = 0$ of the closed-loop system already for the smaller IC  $w(0) = \boldsymbol{\hat{\varphi}_2}[2.4g]$. By further increasing the IC, the closed-loop PDE system \eqref{plant}, \eqref{sf} also becomes unstable, i.e., only a local stabilization result is obtained. This is attributed to the utilized basis functionals, which are only cubic in the state. Additionally, it should be remarked that it is well-known that the system considered in the example is not globally stabilizable (see \cite{Fer00}).

\section{Concluding Remarks}
An obvious extension considers the data-driven design of observers. 
In order to further extend the applicability of the presented results, the consideration of quasilinear parabolic systems in the data-driven controller design is an interesting research topic. Another avenue for further research is to extend various methods for determining a suitable dictionary used in eDMD from finite to infinite dimensions.
\bibliographystyle{IEEEtranS}
\bibliography{IEEEabrv,mybib}

\end{document}